\definecolor{mylinkcolor}{rgb}{0,0,0.8} 
\newcommand{\comment}[1]{}
\newcommand{\ket}[1]{| #1 \rangle}
\newcommand{\bra}[1]{\langle #1 |}
\newcommand{\proj}[1]{|#1\rangle\!\langle#1|}
\newcommand{\id}{\openone}
\newcommand{\cG}{\mathcal{G}}
\newcommand{\cH}{\mathcal{H}}
\newcommand{\cS}{\mathcal{S}}
\newcommand{\rC}{\mathrm{C}}
\newcommand{\rG}{\mathrm{GPT}}
\newcommand{\rQ}{\mathrm{Q}}
\newcommand{\ot}{\otimes}
\DeclareMathOperator{\Tr}{Tr}
\theoremstyle{plain}
\newtheorem{theorem}{Theorem}
\newtheorem{conjecture}{Conjecture}
\newtheorem{lemma}[theorem]{Lemma}
\newtheorem{corollary}[theorem]{Corollary}
\theoremstyle{definition}
\newtheorem{definition}{Definition}
\newtheorem{remark}{Remark}
\begin{document}

\title{Analysing causal structures using Tsallis entropies}
\author{V. Vilasini}
\email{vv577@york.ac.uk}
\affiliation{Department of Mathematics, University of York, Heslington, York YO10 5DD.}
\author{Roger Colbeck}
\email{roger.colbeck@york.ac.uk}
\affiliation{Department of Mathematics, University of York, Heslington, York YO10 5DD.}
\date{\today}

\begin{abstract}
  Understanding cause-effect relationships is a crucial part of the
  scientific process.  As Bell's theorem shows, within a given causal
  structure, classical and quantum physics impose different
  constraints on the correlations that are realisable, a fundamental
  feature that has technological applications. However, in general it
  is difficult to distinguish the set of classical and quantum
  correlations within a causal structure.  Here we investigate a
  method to do this based on using entropy vectors for Tsallis
  entropies. We derive constraints on the Tsallis entropies that are
  implied by (conditional) independence between classical random
  variables and apply these to causal structures.  We find that the
  number of independent constraints needed to characterise the causal
  structure is prohibitively high such that the computations required for
  the standard entropy vector method cannot be employed even for small
  causal structures.  Instead, without solving the whole problem, we
  find new Tsallis entropic constraints for the triangle causal
  structure by generalising known Shannon constraints. Our results reveal
  new mathematical properties of classical and quantum Tsallis
  entropies and highlight difficulties of using Tsallis entropies
  for analysing causal structures. 
\end{abstract}

\maketitle

\section{Introduction} \label{sec:introduction} Cause-effect
relationships between physical systems constrain the correlations that
can arise between them. The study of causality allows us to explain
observed correlations between different variables in terms of
unobserved systems that cause these variables to become
correlated. This has found applications in diverse fields of research
such as medical testing, socio-economic surveys and physics. The
foundational interest in causal structures stems from the fact that
the theory that describes the unobserved systems affects the set of
possible correlations over the observed variables. Bell
inequalities~\cite{Bell} are constraints on the observed correlations
in a classical causal structure (Figure~\ref{fig: Bell}) and can be
violated in quantum and generalised probabilistic theories (GPTs).
The possibility of such violations leads to applications in
device-independent
cryptography~\cite{Ekert91,MayersYao,BHK,RogerThesis,CK,Pironio2009}.

In the bipartite Bell causal structure (Figure~\ref{fig: Bell}), the
set of all joint conditional distributions $P_{XY|AB}$ over the
observed nodes $X,Y,A,B$ that can arise when $\Lambda$ is classical is
relatively well understood.  For fixed input and output sizes, it
forms a convex polytope and hence membership can be checked using a
linear program (although the size of the linear program scales
exponentially with the number of inputs and the problem is
NP-complete~\cite{Pitowski89}).  Because of this, the complete set of
Bell inequalities characterizing these polytopes are unknown for
$|X|,|Y|>3$ or $|A|,|B|>5$~\cite{Masanes2002, Bancal2010, Cope2019}.

In causal structures with more unobserved common causes (such as the
triangle causal structure of Figure~\ref{fig: triangle}), the set of
compatible correlations is not well understood. The inflation
technique~\cite{wolfe2016} can in principle certify whether or not a
given distribution belongs to the classical marginal entropy
cone\footnote{The set of possible entropy vectors over the observed
  nodes of the classical causal structure.} of a causal
structure~\cite{Navascues2017}. However, the method does not tell us
how to construct a suitable inflation of the causal structure in order
to achieve this, or how large this inflation needs to be. Thus, in
general, using the inflation technique becomes intractable in
practice.  The difficulty of solving the general problem in part stems
from its non-convexity. One approach to overcoming this is
to analyse the problem in entropy space~\cite{Yeung97}. This has
proven to be useful in a number of cases (see, e.g.,~\cite{Fritz13,
  Chaves13}, or~\cite{Weilenmann20170483} for a detailed review),
since the problem is convex in entropy space and the entropic
inequalities characterising the relevant sets are independent of the
number of measurement outcomes. However, it was shown
in~\cite{Weilenmann16} that the entropy vector method with Shannon
entropies cannot detect the classical-quantum gap for line-like causal
structures.\footnote{Note that this result holds for the entropic
  characterisation without post-selection. Using the post-selection
  technique (see e.g.,~\cite{Weilenmann20170483} for an explanation),
  one can derive quantum-violatable Shannon entropic inequalities even
  for line-like causal structures~\cite{BraunsteinCaves88} however
  this technique is not generalizable to causal structures that have
  no parentless observed nodes, such as in Figure~\ref{fig: triangle}.} Further, even though new Shannon
entropic inequalities have been derived using this method, no quantum
violation of these have been found for a range of causal structures
where non-classical correlations are known to
exist~\cite{Weilenmann2018, Chaves2014}.  Due to these limitations of
Shannon entropies, it is natural to ask whether other entropic
quantities could do better.

Here we consider Tsallis entropies in the entropy vector method for
analysing causal structures. One motivation for considering such
entropies for the task is that they are a family with an additional
(real) parameter.  The set of entropies for all possible values of
this parameter conveys more information about the underlying
probability distribution than a single member of the family and hence
the ability to vary a parameter may give advantages for analysing causal structures.  Tsallis entropies appear to be a good candidate since they satisfy monotonicity, submodularity
and the chain rule which are desirable properties for their use in the
entropy vector method.\footnote{Other examples of more
  general entropy measures such as the R\'enyi entropy~\cite{renyi} do
  not satisfy one or more of these properties, making it more difficult to get entropic constraints on them using the entropy vector method.} Tsallis entropies
have been considered in the context of causal structures
before~\cite{Wajs15} where they were shown to give an advantage over
Shannon entropy in detecting the non-classicality of certain states in
the Bell scenario if one also post-selects on the values of observed
parentless nodes\footnote{Note that non-classicality cannot be
  detected entropically in the Bell causal structure (Figure~\ref{fig:
    Bell}) without post-selection \cite{Weilenmann16}.}.   Here we
consider a systematic treatment that can be applied to an arbitrary
causal structure in the absence of post-selection. (Note that use of post-selection is not possible in causal structures with no observed parentless nodes such as the Triangle of Figure~\ref{fig: triangle}.)

In Section~\ref{sec: tsalcaus}, we derive the constraints
on the classical Tsallis entropies that are implied by a given causal
structure and in Appendix~\ref{appendix: quantum}, we generalise this
result to quantum Tsallis entropies for certain cases. In Section
\ref{ssec: tsalentvec}, we use these constraints in the entropy vector
method with Tsallis entropies but find that the computational
procedure becomes too time consuming even for simple causal structures
such as the bipartite Bell scenario. Despite this limitation, we
derive new Tsallis entropic inequalities for the triangle causal
structure in Section~\ref{sec: newineq}, using known Shannon entropic
inequalities of~\cite{Chaves2014} and our Tsallis constraints of
Section~\ref{sec: tsalcaus}. In Section~\ref{sec: causaldiscussion},
we discuss the reasons for the computational difficulty of this
method, the drawbacks of using Tsallis entropies for analysing causal
structure and identify potential future directions.

\begin{figure}[t!]
	\centering
	\subfloat[]{\includegraphics[scale=1.0]{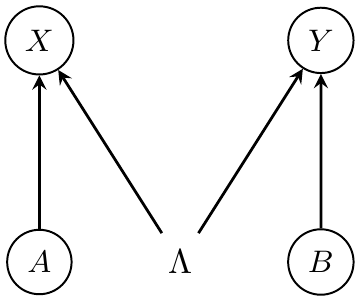}
	\label{fig: Bell}}\qquad\qquad\qquad\qquad\qquad\qquad
\subfloat[]{\includegraphics[scale=1.0]{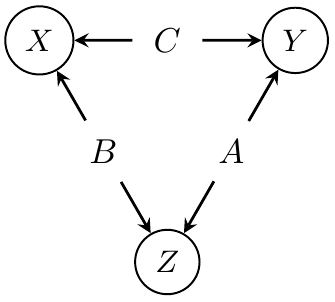}
	\label{fig: triangle}}
\caption{\textbf{Some causal structures:} Observed nodes are circled and the uncircled ones correspond to unobserved nodes. (a) The bipartite Bell causal structure. The nodes $A$ and $B$ represent the random variables corresponding to Alice's and Bob's choice of input while $X$ and $Y$ represent the random variables corresponding to their outputs. $\Lambda$ here is the only potentially unobserved node and is the common cause of $X$ and $Y$. (b) The triangle causal structure. Here, the three observed nodes $X$, $Y$ and $Z$ have unobserved, pairwise common causes $A$, $B$ and $C$, but no joint common cause.}
\end{figure}

\section{Shannon entropy and the entropy vector method}
\label{sec:entvec}
Given a random variable $X$ distributed according to the discrete probability
distribution\footnote{We will only be considering random variables defined on a finite set in this paper.} $p_X$, the Shannon entropy of $X$ is given by $H(X)=-\sum_x p_X(x)\ln p_X(x)$.\footnote{Note that it is common to
  take logarithms in base 2 and measure entropy in \emph{bits}; here
  we use base $e$ corresponding to measuring entropy in \emph{nats}.}
Given two random variables $X$ and $Y$, distributed according to
$P_{XY}$ the conditional Shannon entropy is defined by
$H(X|Y)=-\sum_{x,y}p_{XY}(xy)\ln \frac{p_{XY}(xy)}{p_Y(y)}$ and the
Shannon mutual information by $I(X:Y)=H(X)-H(X|Y)$.  For three random
variables $X$, $Y$ and $Z$, we can also define the mutual information
between $X$ and $Y$ conditioned on $Z$, $I(X:Y|Z)=H(X|Z)-H(X|YZ)$.

We will sometimes use the shorthands $p_x=p_X(x)=p(X=x)$ and
$p_{x|y}:=p_{X|Y}(X=x|Y=y)$ etc.\ for probability
distributions.

We next provide a short overview of the entropy vector method that
suffices for the purposes of this paper. For a more detailed overview
of the method, see~\cite{Weilenmann20170483}. Consider a joint
distribution $p_{X_1,\ldots,X_n}$ over $n$ random variables
$X_1,X_2,\ldots,X_n$. With each such distribution, we associate a vector
with $2^n-1$ components, each of which correspond to the entropy of an
element of the powerset of $\{X_1,X_2,\ldots,X_n\}$ (excluding the empty
set). This defines the \emph{entropy vector} of $p_{X_1,\ldots,X_n}$.
Note that this vector encodes the conditional entropies and mutual
informations via the relations $H(X|Y)=H(XY)-H(Y)$,
$I(X:Y)=H(X)+H(Y)-H(XY)$ and $I(X:Y|Z)=H(XZ)+H(YZ)-H(XYZ)-H(Z)$.  We
use $\mathbf{H}$ to denote
the map that takes a probability distribution over $n$ variables to
its entropy vector (with $2^n-1$ components) and
$\Gamma_n^*$ to denote the set of all
vectors that are entropy vectors of a probability distribution
$p_{X_1,\ldots,X_n}$, i.e., $\Gamma_n^*=\{v\in \mathbb{R}^{2^n-1}: \exists p_{X_1,\ldots,X_n} \text{
  s.t. } v=\mathbf{H}(p_{X_1,\ldots,X_n})\}$. The
closure of $\Gamma_n^*$, denoted by $\overline{\Gamma_n^*}$ is known
to be a convex set for any $n$ \cite{Zhang1997ANC}.
 
\subsection{The Shannon cone}
 
Valid entropy vectors necessarily satisfy certain constraints. These
include positivity of the entropies, monotonicity (i.e.,
$H(R)\leq H(RS)$) and submodularity (also known as
strong-subadditivity; $H(RT)+H(ST)\geq H(RST)+H(T)$). Monotonicity and
submodularity are equivalent to the positivity of the conditional
entropy $H(S|R)$ and the conditional mutual information $I(R:S|T)$
respectively and hold for any three disjoint subsets $R$, $S$ and $T$
of $\{X_1,\ldots,X_n\}$. This set of linear constraints are together
known as the \emph{Shannon constraints} and the set of vectors
$u \in \mathbb{R}^{2^n-1}$ obeying all the Shannon constraints form
the convex cone known as the \emph{Shannon cone}, $\Gamma_n$. Other
than positivity (which, following standard practice, we include
implicitly), there are a total of $n+n(n-1)2^{n-3}$ independent Shannon
constraints for $n$ variables \cite{Yeung97}. By definition, the
Shannon cone is an outer approximation to $\overline{\Gamma_n^*}$
i.e., $\overline{\Gamma_n^*} \subseteq \Gamma_n$.\footnote{For
  $n\leq3$, the cones coincide, but for $n\geq 4$ they do not
  \cite{Zhang1997ANC}.} Hence all entropy vectors derived from a
probability distribution $p_{X_1,\ldots,X_n}$ obey the Shannon
constraints but not all vectors $u \in \mathbb{R}^{2^n-1}$ obeying the
Shannon constraints are such that $\mathbf{H}(p_{X_1,\ldots,X_n})=u$
for some joint distribution $p_{X_1,\ldots,X_n}$. 

In the next subsection we discuss how causal structures give
additional entropic constraints.

\subsection{Entropy vectors and causal structure} \label{ssec:
  entcaus} A causal structure can be represented as a Directed Acyclic
Graph (DAG) over several nodes, some of which are labelled observed
and some unobserved. Each observed node corresponds to a classical
random variable\footnote{These may represent inputs or outputs of an experiment.}, while for each unobserved node there is an
associated system whose nature depends on the theory being
considered. A causal structure is called \emph{classical} (denoted
$\mathcal{G}^\rC$), \emph{quantum} (denoted $\mathcal{G}^\rQ$) or
\emph{GPT} (denoted $\mathcal{G}^\rG$) depending on the nature of the
unobserved nodes. In the following, we briefly review the framework of
classical causal models~\cite{Pearl2000}.

A distribution $p_{X_1,\ldots,X_n}$ over $n$ random variables $\{X_1,\ldots,X_n\}$ is said to be \emph{compatible} with a classical causal structure $\mathcal{G}^\rC$ (with these variables as nodes) if it satisfies the \emph{causal Markov condition} i.e., the joint distribution decomposes as
\begin{equation}
\label{eq: markov}
    p_{X_1,\ldots,X_n}=\prod\limits_{i=1}^{n} p_{X_i|X_i^{\downarrow_1}},
\end{equation}
where $X_i^{\downarrow_1}$ denotes the set of all parent nodes of the
node $X_i$ in the DAG $\mathcal{G}^\rC$. The Markov condition of
Equation~(\ref{eq: markov}) is equivalent to the conditional
independence of $X_i$ from its non-descendants, denoted
$X_i^{\nuparrow}$ given its parents $X_i^{\downarrow_1}$ in
$\mathcal{G}$ i.e., $\forall i\in \{1,\ldots,n\}$,
$p_{X_i
  X_i^{\nuparrow}|X_i^{\downarrow_1}}=p_{X_i|X_i^{\downarrow_1}}p_{X_i^{\nuparrow}|X_i^{\downarrow_1}}$~\cite{Pearl2000}. All
other conditional independences between different subsets of nodes are
implied by these $n$ constraints and can be derived from these
constraints and standard probability calculus based on Bayes' rule.  The
concept of \emph{d-separation} developed by Geiger~\cite{Geiger1987}
and Verma and Pearl~\cite{Verma2013} provides a method to read off
implied conditional independence relations from the graph.  In other
words, for arbitrary disjoint subsets $X$, $Y$ and $Z$ of the nodes,
it can be used to determine whether $X$ and $Y$ are conditionally
independent given $Z$.

\begin{definition}[Blocked paths]
Let $\mathcal{G}$ be a DAG in which $X$ and $Y\neq X$ are nodes and $Z$ be a
set of nodes not containing $X$ or $Y$.  A path from $X$ to $Y$ is
said to be \emph{blocked} by $Z$ if it contains either $A\rightarrow W\rightarrow B$
with $W\in Z$, $A\leftarrow W\rightarrow B$
with $W\in Z$ or $A\rightarrow W\leftarrow B$ such that neither $W$ nor any descendant of $W$ belongs to $Z$.
\end{definition}

\begin{definition}[d-separation]
Let $\mathcal{G}$ be a DAG in which $X$, $Y$ and $Z$ are disjoint
sets of nodes.  $X$ and $Y$ are \emph{d-separated} by $Z$ in
$\mathcal{G}$ if every path from a variable in $X$ to a variable in
$Y$ is \emph{blocked} by $Z$.
\end{definition}

The importance of d-separation is that, given a causal structure
$\cG$, $X$ and $Y$ are d-separated by $Z$ in $\cG$ if and only if
$I(X:Y|Z)=0$ for all distributions compatible with
$\cG$~\cite{Pearl2000}.\footnote{Note that $I(X:Y|Z)=0$ is equivalent to
  $P_{XY|Z}=P_{X|Z}P_{Y|Z}$.}  The complete set of d-separation
conditions give all the conditional independence relations implied by
the DAG.  In the case of Shannon entropy for a DAG with $n$ nodes
these are all implied by the $n$ constraints
\begin{equation}
\label{eq: shancausmain}
  I(X_i:X_i^{\nuparrow}|X_i^{\downarrow_1})=0 \quad \forall i\in\{1,\ldots,n\}. 
\end{equation}
In other words, a distribution over $n$ variables satisfies
Equation~\eqref{eq: markov} if and only if it satisfies
Equation~\eqref{eq: shancausmain}.

Since we wish to contrast classical and quantum versions of causal
structures we also define the latter. For the purpose of this work, it
is sufficient to do so for causal structures with at most two
generations and in which the first generation can be either observed
classical random variables or unobserved quantum nodes, while those of
the second generation are only observed classical variables (in
Appendix~\ref{appendix: quantum} we also look at a case in which the
second generation can be quantum). Each edge emanating from an
unobserved node has an associated Hilbert space labelled by the parent
and the child. For example an edge from an unobserved node $X$ to an
observed node $Y$ has the associated Hilbert space
$\mathcal{H}_{X_Y}$. Each unobserved quantum node corresponds to a
density operator in the tensor product of the Hilbert space
corresponding to all the edges emanating from that node. For each
observed node, there is a POVM that acts on the tensor product of the
Hilbert spaces associated with the edges that meet at that node. The
set of distributions over observed nodes compatible with the quantum
causal structure $\mathcal{G}^Q$ are those that can be obtained by
performing the specified POVMs (possibly specified by classical input
nodes in the first generation) on the relevant quantum states and
using the Born rule.  For instance, a distribution $P_{ABXY}$ is
compatible with the quantum analogue of Figure~\ref{fig: Bell} if
there exists a quantum state
$\rho\in\cH_{\Lambda_X}\ot\cH_{\Lambda_Y}$ and POVMs $\{E^a_x\}_x$ and
$\{F^b_y\}_y$ acting on $\cH_{\Lambda_X}$ and $\cH_{\Lambda_Y}$
respectively, such that
$P_{ABXY}(a,b,x,y)=P_A(a)P_B(b)\Tr(\rho(E^a_x\ot F^b_y))$ for all
values of the random variables.

Now, in the case of classical causal structures with unobserved nodes, the
compatibility condition requires that there exists a joint
distribution $p_{X_1,\ldots,X_n}$ over the $n$ variables satisfying the
causal Markov condition and having the correct marginals over the
observed nodes. In quantum and more general theories, the existence of
a joint state over all the nodes is not guaranteed because there may
be sets of systems that do not coexist. (For example, there is no
joint quantum state of a system and the outcome of a measurement on
it.) Because classical information can be copied, such joint
distributions always exist in the classical
case. The entropy vector method aims to exploit this difference to
certify the non-classicality of correlations.

The entropic constraints over all the nodes will in general imply
constraints on the entropy vector over the observed nodes. These can
be obtained by Fourier-Motzkin elimination~\cite{Williams1986}. The
procedure takes the entropy cone over all nodes, that is constrained
by the $n+n(n-1)2^{n-3}$ Shannon constraints and the $n$ causal
constraints (Equation~\eqref{eq: shancausmain}) and projects it onto the
entropy cone of the observed nodes (eliminating all combinations of
entropies involving unobserved nodes). Since non-classical causal
structures do not satisfy the initial assumption of the existence of
the joint distribution/entropies, they may give rise to correlations
that do not satisfy the marginal constraints on the observed nodes
obtained through this procedure. A violation of one of the
inequalities certifies the non-classicality of that causal
structure.

For line-like causal structures (of which the bipartite Bell causal
structure of Figure~\ref{fig: Bell} is an instance), the classical and
quantum Shannon entropy cones coincide and Shannon entropic
inequalities cannot certify the non-classicality of these causal
structures even though they support non-classical
correlations~\cite{Weilenmann16}. Further, in other scenarios such as
the triangle which is also known to support non-classical
correlations~\cite{Fritz2012}, known Shannon entropic inequalities
such as those of~\cite{Chaves2014, Weilenmann2018} have no known
quantum violations. The main question of the current work is whether
using Tsallis entropies can provide tighter, quantum violatable
entropic inequalities and avoid these limitations.

\section{Tsallis entropies}
\label{sec: tsallis}
For a classical random variable $X$ distributed according to the discrete probability distribution $p_X$, the order $q$ Tsallis entropy of $X$ for real parameter $q$ is defined as~\cite{Tsallis1988}
\begin{equation}
\label{eq: tsallis1}
 S_q(X)= 
 \begin{cases}
 -\sum_{\{x:p_x>0\}}p_x^q\ln_q p_x & \text{if } q\neq 1\\
 H(X) & \text{if } q=1
 \end{cases}
\end{equation}
where we have used the short-hand $\ln_q p_x=\frac{p_x^{1-q}-1}{1-q}$.
This $q$-logarithm function converges to the natural logarithm
in the limit $q \rightarrow 1$ so that
$\lim\limits_{q\rightarrow 1}S_q(X)=H(X)$ and the function is
continuous in $q$. For brevity, we will henceforth write $\sum_x$
instead of $\sum_{\{x:p_x>0\}}$, keeping it implicit that probability
zero events do not contribute to the sum.\footnote{Note that this
  means the Tsallis entropy for $q<0$ is not robust in the sense that
  small changes in the probability distribution can lead to large
  changes in the Tsallis entropy.}

The conditional Tsallis entropy~\cite{Furuichi04} is defined by
\begin{equation}
    \label{eq: tsalliscond}
    S_q(X|Y):=  \begin{cases}
 -\sum_{x,y} p_{xy}^q \ln_q p_{x|y}& \text{if } q\neq 1\\
 H(X|Y) & \text{if } q=1
 \end{cases}
\end{equation}
and converges to the Shannon conditional entropy $H(X|Y)$ in the limit $q\rightarrow 1$.
Note that there are other ways to define the conditional Tsallis entropy~\cite{ABE2001157} but they do not satisfy the chain rule (Equation~\eqref{eq: chain2}) and hence will not be considered here.

The unconditional and conditional Tsallis mutual informations are defined analogously to the Shannon case
\begin{equation}
\label{eq: tsalmi1}
    \begin{split}
        &I_q(X:Y)=S_q(X)-S_q(X|Y),\\
        &I_q(X:Y|Z)=S_q(X|Z)-S_q(X|YZ).
    \end{split}
\end{equation}

\subsection{Properties of Tsallis entropies}
\label{ssec: tsalprop}
Tsallis entropies satisfy a number of properties that are desirable
for their use in the entropy vector method. For any joint distribution
over the random variables involved the following properties hold. 
\begin{enumerate}
    \item \label{prop: pseudo}\textbf{Pseudo-additivity \cite{Curado1991}:} For two independent random
      variables $X$ and $Y$ i.e., $p_{XY}=p_Xp_Y$, and for all $q$, the Tsallis entropies satisfy
    \begin{equation}
        \label{eq: pseudoadd}
        S_q(XY)=S_q(X)+S_q(Y)+(1-q)S_q(X)S_q(Y).
    \end{equation}
    Note that in the Shannon case ($q=1$), we recover additivity for
    independent random variables.
  \item \label{prop: upper} \textbf{Upper bound \cite{Furuichirelentropy}:} For $q\geq 0$ we have
    $S_q(X)\leq\ln_qd_X$.  For $q>0$ equality is achieved if and only
    if $P_X(x)=1/d_X$ for all $x$ (i.e., if the distribution on $X$ is
    uniform).
    \item \textbf{Monotonicity \cite{Daroczy1970}:} For all $q$,
    \begin{equation}
        \label{eq: mono}
        S_q(X)\leq S_q(XY).
    \end{equation}
    \item \textbf{Strong subadditivity \cite{Furuichi04}:} For $q\geq1$,
    \begin{equation}
    \label{eq: subadd}
           S_q(XYZ)+S_q(Z)\leq S_q(XZ)+S_q(YZ).
    \end{equation}
   
    \item \textbf{Chain rule~\cite{Furuichi04}:} For all $q$,
    \begin{equation}
        \label{eq: chain2}
        S_q(X_1,X_2,\ldots,X_n|Y)=\sum_{i=1}^n S_q(X_i|X_{i-1},\ldots,X_1,Y).
    \end{equation}
\end{enumerate}
   The chain rules $S_q(XY)=S_q(X)+S_q(Y|X)$ and
   $S_q(XY|Z)=S_q(X|Z)+S_q(Y|XZ)$ emerge as particular cases and allow
   the Tsallis mutual informations of Equation~\eqref{eq: tsalmi1} to be
   written as
    \begin{equation}
\label{eq: tsalmi2}
    \begin{split}
        I_q(X:Y)&=S_q(X)+S_q(Y)-S_q(XY),\\
        I_q(X:Y|Z)&=S_q(XZ)+S_q(YZ)-S_q(Z)-S_q(XYZ).
    \end{split}
\end{equation}

Using the chain rule, the monotonicity and strong subadditivity relations (Equations~\eqref{eq: mono} and~\eqref{eq: subadd}) are equivalent to the non-negativity of the unconditional and conditional Tsallis mutual
informations. For $q<1$, strong subadditivity does not hold in
general~\cite{Furuichi04}, hence we often restrict to the case $q\geq
1$ in what follows.

\section{Causal constraints and Tsallis entropy vectors}
\label{sec: tsalcaus}

In Section \ref{ssec: tsalprop}, we discussed some of the general properties of Tsallis entropy that hold irrespective of the underlying causal structure over the variables. The causal structure imposes the causal Markov constraints on the joint probability distribution over the variables involved (Section \ref{ssec: entcaus}) and we wish to translate these probabilistic constraints into Tsallis entropic ones in order to use Tsallis entropies in the entropy vector method for analysing causal structures. 

A first observation is that Tsallis entropy vectors do not in general satisfy the causal constraints (Equation~\eqref{eq: shancausmain}) satisfied by their Shannon counterparts. For a concrete counterexample, consider the simple, three variable causal structure where $Z$ is a common cause of $X$ and $Y$, and where there are no other causal relations. In terms of Shannon entropies, the only causal constraint in this case is $I(X:Y|Z)=0$. Taking $X,Y$ and $Z$ to be binary variables with possible values $0$ and $1$, the distribution $p_{xyz}=1/4$ $\forall x\in X, y\in Y$ if $z=0$ and $p_{xyz}=0$ otherwise, satisfies $p_{xy|z}=p_{x|z}p_{y|z}$ $\forall x\in X, y\in Y$ and $z\in Z$ but has a $q=2$ Tsallis conditional mutual information of $I_2(X:Y|Z)=\frac{1}{4}$. Hence when using Tsallis entropies (and conditional Tsallis entropy as defined in Section~\ref{sec: tsallis}), the causal constraint cannot be simply encoded by $I_q(X:Y|Z)=0$ for $q>1$.

Given this observation, it is natural to ask whether there are constraints for Tsallis entropies implied by the causal Markov condition (Equation~\eqref{eq: markov}). We answer this question with the following Theorems.

\begin{theorem}
\label{theorem: mibound}
If a joint probability distribution $p_{XY}$ over random variables $X$ and $Y$ with alphabet sizes $d_X$ and $d_Y$ factorises as $p_{XY}=p_Xp_Y$, then for all $q\in[0,\infty)$, the Tsallis mutual information $I_q(X:Y)$ is upper bounded by
\begin{equation*}
    I_q(X:Y)\leq f(q,d_X,d_Y)\,,
\end{equation*}
where the function $f(q,d_X,d_Y)$ is given by
\begin{equation*}
    f(q,d_X,d_Y)=\frac{1}{(q-1)}\left(1-\frac{1}{d_X^{q-1}}\right)\left(1-\frac{1}{d_Y^{q-1}}\right)=(q-1)\ln_qd_X\ln_qd_Y.
\end{equation*}
For $q\in(0,\infty)\setminus\{1\}$, the bound is saturated if and
only if $p_{XY}$ is the uniform distribution over $X$ and $Y$.
\end{theorem}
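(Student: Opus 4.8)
The plan is to collapse the two–variable quantity $I_q(X:Y)$ into a product of single–variable Tsallis entropies and then bound each factor separately. Since $p_{XY}=p_Xp_Y$, the pseudo-additivity relation (Equation~\eqref{eq: pseudoadd}) gives $S_q(XY)=S_q(X)+S_q(Y)+(1-q)S_q(X)S_q(Y)$. Substituting this into $I_q(X:Y)=S_q(X)+S_q(Y)-S_q(XY)$ (Equation~\eqref{eq: tsalmi2}) cancels the linear terms and leaves the exact identity $I_q(X:Y)=(q-1)\,S_q(X)\,S_q(Y)$. This is the crux of the argument: it writes the mutual information as a product of two nonnegative quantities carrying the single sign-bearing factor $(q-1)$, so that any bound on each marginal entropy transfers immediately to a bound on $I_q$.

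I would then invoke the upper-bound property (Property~\ref{prop: upper}), which for $q\ge 0$ gives $0\le S_q(X)\le \ln_q d_X$ and $0\le S_q(Y)\le \ln_q d_Y$. For $q>1$ the prefactor $(q-1)$ is positive, so
\[
I_q(X:Y)=(q-1)\,S_q(X)\,S_q(Y)\le (q-1)\,\ln_q d_X\,\ln_q d_Y = f(q,d_X,d_Y).
\]
To recover the first displayed form of $f$, I would record the elementary identity $(q-1)\ln_q d = 1-d^{1-q}=1-d^{-(q-1)}$, read off directly from $\ln_q d=\tfrac{d^{1-q}-1}{1-q}$; multiplying the two such factors reproduces $\tfrac{1}{q-1}\bigl(1-d_X^{-(q-1)}\bigr)\bigl(1-d_Y^{-(q-1)}\bigr)$. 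The value $q=1$ is then handled by continuity of $S_q$ in $q$ (noted below Equation~\eqref{eq: tsallis1}): both $I_q$ and $f$ tend to $0$, consistent with $I(X:Y)=0$ for independent variables.

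For the saturation statement with $q\in(0,\infty)\setminus\{1\}$ I would argue from the same identity: equality $I_q(X:Y)=f$ is equivalent to $S_q(X)\,S_q(Y)=\ln_q d_X\,\ln_q d_Y$. Since each factor is nonnegative and capped by its (strictly positive, for $d_X,d_Y\ge 2$) maximum, the product can equal the product of the maxima only if $S_q(X)=\ln_q d_X$ and $S_q(Y)=\ln_q d_Y$ hold simultaneously. The equality clause of Property~\ref{prop: upper}, valid for $q>0$, then forces both marginals to be uniform, and combined with $p_{XY}=p_Xp_Y$ this is exactly the uniform joint distribution.

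The step I expect to be most delicate is the regime $q<1$, where $(q-1)<0$ reverses the direction in which the two single-variable bounds propagate to $I_q$; the sign bookkeeping across the point $q=1$ is the part that needs the most care, and it is where I would focus the verification. By contrast, the pseudo-additivity substitution and the algebraic identification of the two forms of $f$ are routine once the central identity $I_q(X:Y)=(q-1)S_q(X)S_q(Y)$ is established.
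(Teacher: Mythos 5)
Your argument is precisely the paper's own proof: pseudo-additivity collapses the mutual information to the identity $I_q(X:Y)=(q-1)S_q(X)S_q(Y)$, and the dimension bound $S_q\leq\ln_q d$ of Property~\ref{prop: upper} is applied to each factor; the saturation argument via the equality clause of that property is also the same. For $q>1$, and for $q=1$ by continuity, your proof is complete and matches the paper's line for line.

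The point you flag at $q<1$ is not merely delicate --- it is fatal to the statement as written, and the paper's proof silently commits the sign error there. For $0\leq q<1$ the prefactor $q-1$ is negative, so multiplying the two bounds $0\leq S_q(X)\leq\ln_qd_X$ and $0\leq S_q(Y)\leq\ln_qd_Y$ together and then by $(q-1)$ \emph{reverses} the inequality: one obtains $I_q(X:Y)\geq f(q,d_X,d_Y)$, i.e.\ $f$ is a lower bound in that regime, not an upper bound. A concrete violation of the claimed upper bound: take $X$ and $Y$ independent and deterministic on alphabets of size $2$, so that $S_q(X)=S_q(Y)=0$ and hence $I_q(X:Y)=0$, while $f\bigl(\tfrac12,2,2\bigr)=-2\bigl(1-\sqrt2\bigr)^2\approx-0.343<0$; a full-support example such as $p_X=p_Y=(0.9,0.1)$ gives $I_{1/2}(X:Y)\approx-0.140>f\approx-0.343$ as well. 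The chain of (in)equalities in Equation~\eqref{eq: mibound} is therefore only valid for $q\geq1$, which is the only regime in which the theorem is invoked downstream (Theorem~\ref{theorem: causal} and the corollaries all assume $q\geq1$). So your proof establishes the theorem exactly where it is true, and your instinct that the $q<1$ bookkeeping is the weak point is correct: there is nothing left to verify there, because the inequality genuinely goes the other way.
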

\begin{proof}
The proof follows from the pseudo-additivity of Tsallis entropies
(Property~\ref{prop: pseudo}) and the upper bound (Property~\ref{prop:
  upper}).  Using these, for all $q\geq 0$ and for all product
distributions $p_{XY}=p_Xp_Y$, we have
\begin{equation}
    \label{eq: mibound}
    I_q(X:Y)=S_q(X)+S_q(Y)-S_q(XY)=(q-1)S_q(X)S_q(Y)\leq \frac{\left(1-\frac{1}{d_X^{q-1}}\right)\left(1-\frac{1}{d_Y^{q-1}}\right)}{q-1}=f(q,d_X,d_Y)\,.
\end{equation}
Whenever $q\in(0,\infty)\setminus\{1\}$, the bound is saturated if
and only if $p_{XY}$ is uniform over $X$ and $Y$ since,
for these values of $q$, $S_q(X)$ and $S_q(Y)$ both attain their
maximum values if and only if this is the case.
\end{proof}

\begin{theorem}
\label{theorem: causal}
If a joint probability distribution $p_{XYZ}$ satisfies the conditional independence $p_{XY|Z}=p_{X|Z}p_{Y|Z}$, then for all $q\geq 1$ the Tsallis conditional mutual information $I_q(X:Y|Z)$ is upper bounded by
\begin{equation*}
    I_{q}(X:Y|Z)\leq f(q,d_X,d_Y)\,.
\end{equation*}
 For $q>1$, the bound is saturated only by distributions in which for
 some fixed value $k$ the joint probabilities are given by
 $p_{xyz}=\begin{cases}\frac{1}{d_Xd_Y} \quad \text{if}\quad z=k\\ 0
   \quad \qquad \text{otherwise}\end{cases}$ for all $x$, $y$ and
 $z$\footnote{These distributions have deterministic $Z$ and there is
   one such distribution for each value that $Z$ can take.}.
\end{theorem}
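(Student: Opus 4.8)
The plan is to reduce the conditional statement to the unconditional bound of Theorem~\ref{theorem: mibound} by slicing over the values of $Z$. The essential observation is that the conditional Tsallis entropy carries a distinctive $p_z^q$ weighting. Writing $p_{xyz}^q=p_z^q\,(p_{xy|z})^q$ and $p_{xz}^q=p_z^q\,p_{x|z}^q$ and substituting into definition~\eqref{eq: tsalliscond}, I would show directly that
\begin{equation*}
 S_q(X|Z)=\sum_z p_z^q\,S_q^{(z)}(X),\qquad S_q(X|YZ)=\sum_z p_z^q\,S_q^{(z)}(X|Y),
\end{equation*}
where $S_q^{(z)}(\cdot)$ denotes the (conditional) Tsallis entropy computed \emph{within} the conditional distribution $p_{\,\cdot\,|Z=z}$. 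Subtracting these and applying the chain rule inside each slice yields the decomposition
\begin{equation*}
 I_q(X:Y|Z)=\sum_z p_z^q\,I_q^{(z)}(X:Y),\qquad I_q^{(z)}(X:Y)=S_q^{(z)}(X)+S_q^{(z)}(Y)-S_q^{(z)}(XY),
\end{equation*}
where $I_q^{(z)}$ is the Tsallis mutual information of the slice distribution $p_{XY|Z=z}$.

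With this in hand the bound is almost immediate. The hypothesis $p_{XY|Z}=p_{X|Z}p_{Y|Z}$ says that each slice $p_{XY|Z=z}$ is a product distribution, so Theorem~\ref{theorem: mibound} applies within every slice to give $I_q^{(z)}(X:Y)\leq f(q,d_X,d_Y)$. Since $p_z^q\geq 0$ and $f(q,d_X,d_Y)\geq 0$ for $q\geq 1$, I would then bound
\begin{equation*}
 I_q(X:Y|Z)\leq f(q,d_X,d_Y)\sum_z p_z^q\leq f(q,d_X,d_Y),
\end{equation*}
the last step using $\sum_z p_z^q\leq\sum_z p_z=1$, which is valid precisely because $q\geq 1$ (this is where the restriction on $q$ enters, alongside the need for $f\geq 0$).

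For the saturation claim at $q>1$ I would trace back the two inequalities. Equality in the first requires $I_q^{(z)}(X:Y)=f(q,d_X,d_Y)$ for every $z$ with $p_z>0$; by the saturation part of Theorem~\ref{theorem: mibound} this forces each such slice $p_{XY|Z=z}$ to be uniform over the \emph{full} alphabets of $X$ and $Y$. Equality in the second, given that $f>0$ when $d_X,d_Y\geq 2$, requires $\sum_z p_z^q=1=\sum_z p_z$; since $p_z-p_z^q\geq 0$ termwise for $q>1$, this forces $p_z\in\{0,1\}$, i.e.\ $Z$ is deterministic with some fixed value $k$. Combining the two conditions reproduces exactly the claimed family $p_{xyz}=1/(d_Xd_Y)$ for $z=k$ and $0$ otherwise.

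The main obstacle is establishing the weighted slice decomposition cleanly from the definitions, and then handling the saturation bookkeeping correctly. In particular, one must note that $f$ is evaluated at the ambient alphabet sizes $d_X,d_Y$, so a slice whose support is strictly smaller than $d_X$ or $d_Y$ gives strict inequality; hence the equality condition of Theorem~\ref{theorem: mibound} demands full uniformity on each nonzero slice rather than merely uniformity on its support. Once the decomposition is in place, the remaining arguments are elementary.
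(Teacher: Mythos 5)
Your proposal is correct and follows essentially the same route as the paper's proof: both hinge on the slice decomposition $I_q(X:Y|Z)=\sum_z p_z^q\, I_q(X:Y)_{p_{XY|Z=z}}$ (the paper obtains it by directly expanding the four-term sum of $q$-th powers, you via the definition of the conditional entropy, which is an immaterial difference), followed by Theorem~\ref{theorem: mibound} applied within each slice and the bound $\sum_z p_z^q\leq 1$ for $q\geq 1$. Your saturation bookkeeping is if anything slightly more explicit than the paper's, which only spells out the determinism of $Z$; the content is the same.
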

\begin{proof}
Writing out $I_q(X:Y|Z)$ in terms of probabilities we have
\begin{align*}
    I_q(X:Y|Z)&=\frac{1}{q-1}\big[\sum_{xyz}p^q(xyz)+\sum_z p^q(z) -\sum_{xz} p^q(xz)-\sum_{yz}p^q(yz)\big]\\
    &=\frac{1}{q-1}\sum_zp^q(z)\big[\sum_{xy}p^q(xy|z)+1 -\sum_{x} p^q(x|z)-\sum_{y}p^q(y|z)\big]\\
    &=\sum_z p^q(z)I_q(X:Y)_{p_{XY|Z=z}}.
\end{align*}
Using this and Theorem \ref{theorem: mibound}, we can bound $I_q(X:Y|Z)$ as 
\begin{align*}
\max\limits_{\substack{p_{XYZ}=p_Zp_{X|Z}p_{Y|Z}}} I_q(X:Y|Z) &=\max\limits_{\substack{p_{XYZ}=p_Zp_{X|Z}p_{Y|Z}}}\sum_z p_z^qI_q(X:Y)_{p_{XY|Z=z}}\\
&\leq \max\limits_{p_Z}\sum_z p_z^q \max\limits_{p_{X|Z}p_{Y|Z}}I_q(X:Y)_{p_{XY|Z=z}}\\
&=\max\limits_{p_Z}\sum_zp_z^qf(q,d_X,d_Y)=f(q,d_X,d_Y)\,.
\end{align*}
The last step holds because for all $q>1$, $\sum_zp_z^q$ is
maximized by deterministic distributions over $Z$ with a maximum value
of $1$ i.e., only distributions $p_{XYZ}$ that are deterministic over
$Z$ saturate the upper bound of $f(q,d_X,d_Y)$.  This completes the proof.
\end{proof}

Two corollaries of Theorem~\ref{theorem: causal} naturally follow.
\begin{corollary}
\label{corollary: causal} Let $X$, $Y$ and $Z$ be random variables
with fixed alphabet sizes.  Then for all $q\geq1$ we have
\begin{equation*}
   \max\limits_{\substack{p_{XYZ}\\ p_{XY|Z}=p_{X|Z}p_{Y|Z}}} I_q(X:Y|Z) = \max\limits_{\substack{p_{XY}\\ p_{XY}=p_Xp_Y}} I_q(X:Y)\,.
\end{equation*}
Furthermore, for $q>1$, the maximum on the left hand side is achieved only by distributions in which for some fixed value $k$ the joint probabilities are given by $p_{xyz}=\begin{cases}\frac{1}{d_Xd_Y} \quad \text{if} \quad z=k\\ 0
  \quad \qquad \text{otherwise}\end{cases}$, while the maximum on the right hand side occurs if and
only if $P_{XY}$ is the uniform distribution.
\end{corollary}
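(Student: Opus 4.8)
The plan is to deduce the corollary directly from Theorems~\ref{theorem: mibound} and~\ref{theorem: causal} by showing that both maxima equal the common value $f(q,d_X,d_Y)$, so that the equality of the maxima is immediate and the optimizer characterizations are simply inherited. First I would treat the right-hand side. Theorem~\ref{theorem: mibound} states that $I_q(X:Y)\leq f(q,d_X,d_Y)$ for every product distribution $p_{XY}=p_Xp_Y$, with the bound saturated (for $q>1$) exactly when $p_{XY}$ is uniform. Since the uniform distribution is itself a product distribution and is therefore admissible in the maximization, the right-hand maximum is attained and equals $f(q,d_X,d_Y)$.

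Next I would treat the left-hand side using Theorem~\ref{theorem: causal}, which gives $I_q(X:Y|Z)\leq f(q,d_X,d_Y)$ for all $q\geq 1$ whenever $p_{XY|Z}=p_{X|Z}p_{Y|Z}$. To see that this bound is actually attained, I would exhibit a feasible distribution achieving it, namely the one singled out in Theorem~\ref{theorem: causal} in which $Z$ is deterministic at some value $k$ and $p_{XY|Z=k}$ is uniform. Using the decomposition $I_q(X:Y|Z)=\sum_z p_z^q\, I_q(X:Y)_{p_{XY|Z=z}}$ established in the proof of that theorem, for this distribution only the $z=k$ term survives, with $p_k^q=1$ and $I_q(X:Y)_{p_{XY|Z=k}}=f(q,d_X,d_Y)$ by the uniformity of the conditional. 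Hence the left-hand maximum also equals $f(q,d_X,d_Y)$, and combining the two sides yields the claimed equality of the maxima for all $q\geq 1$.

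Finally, the characterization of the optimizers for $q>1$ is inherited verbatim from the two theorems: the left-hand maximum is attained only by the deterministic-$Z$, conditionally-uniform distributions identified in Theorem~\ref{theorem: causal}, and the right-hand maximum only by the uniform $p_{XY}$ of Theorem~\ref{theorem: mibound}. I do not anticipate any serious obstacle, since the corollary is essentially a repackaging of the two theorems; the one point requiring a moment of care is verifying that the upper bound $f(q,d_X,d_Y)$ is genuinely achieved on each side, so that the relations are equalities rather than mere inequalities, which the explicit optimizers above confirm.
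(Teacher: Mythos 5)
Your proposal is correct and matches the paper's intent: the paper offers no separate proof, presenting the corollary as following directly from Theorems~\ref{theorem: mibound} and~\ref{theorem: causal}, and your argument supplies exactly the expected details (both maxima equal $f(q,d_X,d_Y)$, achieved by the optimizers those theorems identify). Nothing is missing.
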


The significance of these new relations for causal structures is then
given by the following corollary.

\begin{corollary}
\label{corollary: mainconstraint}
Let $p_{X_1\ldots X_n}$ be a distribution compatible with the classical
causal structure $\cG^\rC$ and $X$, $Y$ and $Z$ be disjoint subsets of
$\{X_1,\ldots,X_n\}$ such that $X$ and $Y$ are d-separated given $Z$.
Then for all $q\geq1$ we have
\begin{equation*}
    I_q(X:Y|Z)\leq f(q,d_X,d_Y)\,,
\end{equation*}
where $d_X$ is the product of $d_{X_i}$ for all $X_i\in X$, and
likewise for $d_Y$.
\end{corollary}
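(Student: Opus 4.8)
The plan is to reduce Corollary~\ref{corollary: mainconstraint} to Theorem~\ref{theorem: causal} by treating the disjoint subsets $X$, $Y$ and $Z$ as single ``grouped'' random variables. First I would observe that d-separation of $X$ and $Y$ given $Z$ in $\cG^\rC$ guarantees, by the characterisation of d-separation stated in the excerpt, that $I(X:Y|Z)=0$ for all distributions compatible with $\cG^\rC$, and in particular that the probabilistic conditional independence $p_{XY|Z}=p_{X|Z}p_{Y|Z}$ holds for $p_{X_1\ldots X_n}$. Since $X=\{X_{i_1},\ldots\}$, $Y$ and $Z$ are disjoint sets of nodes, I would define the composite variables $\tilde X$, $\tilde Y$ and $\tilde Z$ whose alphabets are the Cartesian products of the alphabets of the constituent nodes, so that their alphabet sizes are exactly $d_X=\prod_{X_i\in X}d_{X_i}$, $d_Y=\prod_{X_i\in Y}d_{X_i}$ and $d_Z=\prod_{X_i\in Z}d_{X_i}$ as defined in the statement.

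Next I would note that the Tsallis conditional mutual information $I_q(X:Y|Z)$ for the subsets coincides exactly with $I_q(\tilde X:\tilde Y|\tilde Z)$ for the composite variables, because the defining expression in Equation~\eqref{eq: tsalmi2} depends only on the joint entropies $S_q(\tilde X\tilde Z)$, $S_q(\tilde Y\tilde Z)$, $S_q(\tilde Z)$ and $S_q(\tilde X\tilde Y\tilde Z)$, and these are computed from the same underlying joint probabilities regardless of whether we view the arguments as single variables or as groups. The conditional independence $p_{XY|Z}=p_{X|Z}p_{Y|Z}$ likewise translates directly into $p_{\tilde X\tilde Y|\tilde Z}=p_{\tilde X|\tilde Z}p_{\tilde Y|\tilde Z}$. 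Having set this up, I would apply Theorem~\ref{theorem: causal} to the triple $(\tilde X,\tilde Y,\tilde Z)$, which holds for all $q\geq 1$, obtaining $I_q(\tilde X:\tilde Y|\tilde Z)\leq f(q,d_{\tilde X},d_{\tilde Y})=f(q,d_X,d_Y)$, which is precisely the claimed bound.

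The only genuine content to verify is that grouping is legitimate at each of these steps, so the main obstacle, such as it is, is bookkeeping rather than mathematics: I must check that the composite-variable reinterpretation preserves both the entropic quantity and the independence hypothesis. The entropy preservation is immediate since Tsallis entropy $S_q$ depends only on the probability distribution over the combined outcome space and is insensitive to how we label that space; a bijection between the joint alphabet of a group of nodes and the alphabet of a single composite variable leaves every $p^q$ term in Equation~\eqref{eq: tsallis1} unchanged. The independence preservation follows because factorisation of conditional distributions is a statement about the joint distribution that is invariant under this relabelling. With these two observations in place, Theorem~\ref{theorem: causal} applies verbatim and the corollary follows. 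I would not expect to need the saturation condition from the theorem here, since the corollary asserts only the inequality; I would therefore omit any discussion of equality cases.
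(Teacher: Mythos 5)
Your proposal is correct and matches the paper's (implicit) argument: the paper states this corollary without a separate proof precisely because it follows from Theorem~\ref{theorem: causal} once one notes that d-separation implies $p_{XY|Z}=p_{X|Z}p_{Y|Z}$ for all compatible distributions and that the subsets can be treated as composite variables with alphabet sizes $d_X=\prod_{X_i\in X}d_{X_i}$ and $d_Y=\prod_{X_i\in Y}d_{X_i}$. Your explicit bookkeeping of why grouping preserves both the entropic quantities and the independence hypothesis is exactly the right content to make this reduction rigorous.
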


\begin{remark}
The results of this section can be generalised to the quantum case under certain assumptions i.e., as constraints on quantum Tsallis entropies implied by certain quantum causal structures (see the Appendix~\ref{appendix: quantum} for details). Note that only constraints on the classical Tsallis entropy vectors derived in this section are required to detect the classical-quantum gap. Hence, Appendix~\ref{appendix: quantum} is not pertinent to the main results of this paper but can be seen as additional results regarding the properties of quantum Tsallis entropies.
\end{remark}

\subsection{Number of independent Tsallis entropic causal constraints}
\label{ssec: numconst}
We saw previously that in the Shannon case ($q=1$), the $n$ conditions
of the form $I(X_i:X_i^\nuparrow|X_i^{\downarrow_1})=0$
($i=1,\ldots,n$) imply all the independence relations that follow from
the causal structure. In the Tsallis case however, the $n$ conditions
of the form
$I_q(X_i:X_i^{\nuparrow}|X_i^{\downarrow_1})\leq
f(q,d_{X_i},d_{X_i^{\nuparrow}})$ do not do the same.  In the
bipartite Bell and triangle causal structures we find that there is no
redundancy amongst the 53 and 126 distinct Tsallis entropic
inequalities that are implied by the d-separation relations in the
corresponding DAGs in the case where the dimension (cardinality) of each
individual node is taken to be $d$.  In more detail, we used linear
programming to show that each implication of d-separation yields a
non-trivial entropic causal constraint for all $q>1$ and $d>2$ for the
bipartite Bell and triangle causal structures. By comparison, in these causal structures five and six independent Shannon entropic constraints imply all the others.  As an illustration of the
difference, in the Shannon case, $I(A:BC)=0$ implies $I(A:B)=I(A:C)=0$, whereas
the analogous implication does not hold in the Tsallis case in
general: although $I_q(A:BC)\leq f(q,d_A,d_{BC})$ implies
$I_q(A:B)\leq f(q,d_A,d_{BC})$, it is not the case that
$I_q(A:BC)\leq f(q,d_A,d_{BC})$ implies
$I_q(A:B)\leq f(q,d_A,d_B)$.\footnote{For an explicit counterexample,
  consider
  $p_{ABC}=\{\frac{3}{10},0.0,\frac{2}{10},0.0,\frac{1}{10},\frac{1}{10},\frac{2}{10},\frac{1}{10}\}$
  over binary $A$, $B$ and $C$ for which $I_2(A:BC)=9/25<3/8=f(2,2,4)$
  but $I_2(A:B)=13/50>1/4=f(2,2,2)$.}

The number of distinct conditional independences (and hence the number
of independent Tsallis constraints that follow from d-separation) in a
DAG depends on the specific graph, however for any DAG $\cG_n$ with
$n$ nodes, the number of such constraints can be upper bounded by that
of the $n$-node DAG where all $n$ nodes are independent i.e., the $n$
node DAG with no edges. The number of conditions in this DAG can be
thought of as the number of ways of partitioning $n$ objects into four
disjoint subsets\footnote{The four subsets correspond
to the three arguments of the conditional mutual information and a set
of `leftovers'.} such that the first two are non-empty and where the
ordering of the first two does not matter. Therefore, there are at most
$\frac{1}{2}(4^n-2\times3^n+2^n)$ such conditions.

\subsection{Using Tsallis entropies in the entropy vector method}
\label{ssec: tsalentvec}

We used the causal constraints of Corollary \ref{corollary:
  mainconstraint} in the entropy vector method with the aim of
deriving new quantum-violatable entropic inequalities for the triangle
causal structure (Figure \ref{fig: triangle}). To do so, we started
with the variables $A,B,C,X,Y,Z$ of the triangle causal structure, the
Shannon constraints and causal constraints satisfied by the Tsallis
entropy vectors over these variables (Corollary~\ref{corollary:
  mainconstraint}) and used a Fourier-Motzkin (FM) elimination
algorithm (from {\sc porta}\footnote{Polyhedral Representation and
  Transformation Algorithm: \url{http://porta.zib.de/}.}) to
eliminate the Tsallis entropy components involving the unobserved
variables $A,B,C$ and obtain the constraints on the observed
nodes $X,Y,Z$.

The Tsallis entropy vector for the six nodes has $2^6-1=63$ components. The
required marginal scenario with the observed nodes $X,Y,Z$ has Tsallis
entropy vectors with $2^3-1=7$ components and in this case, the Fourier-Motzkin algorithm has to run $56$ iterations, each of which eliminates
one variable.

Starting with the full set of 126 Tsallis entropic causal constraints
for the triangle causal structure as well as the 246 independent
Shannon constraints, the Fourier-Motzkin elimination algorithm did not
finish within several days on a standard desktop PC and the number of
intermediate inequalities generated grew to about 90,000 after 11 steps. Because of this we instead tried starting with a subset
comprising 15 of the 126 Tsallis entropic causal
constraints\footnote{These included the 6 that follow from ``each node
  $N_i$ is conditionally independent of its descendants given its
  parents" (denoted as $N_i\perp N_i^{\nuparrow}|N_i^{\downarrow_1}$)
  and 9 more chosen arbitrarily from the total of 126 independent
  Tsallis constraints we found for the triangle. The 6 former
  constraints for the triangle (Figure~\ref{fig: triangle}) are
  $A\perp CXB$, $B\perp CYA$, $C\perp BZA$, $X\perp YAZ|CB$,
  $Y\perp XBZ|AC$ and $Z\perp YCX|AB$. An example of 9 more
  constraints for which the procedure did not work are $X\perp Y|CB$,
  $X\perp A|CB$, $X\perp Z|CB$, $Y\perp X|AC$, $Y\perp B|AC$,
  $Y\perp Z|AC$, $Z\perp Y|AC$, $Z\perp C|AB$ and $Z\perp X|AB$.  We
  also tried some other choices and number of constraints but this did
  not lead to any improvement.}  i.e., 261 constraints on 63
dimensional vectors. We considered the case of $q=2$ and where the six
random variables are all binary. Again, in this case the algorithm did
not finish after several days. We also tried starting with fewer
causal constraints (for example, the six constraints analogous to the
Shannon case) as well as using a modified code, optimised to deal with
redundancies better but both of these attempts made no significant
difference to this outcome.

Such a rapid increase of the number of inequalities in each step is a
known problem with Fourier-Motzkin elimination where an elimination
step over $n$ inequalities can result in up to $n^2/4$ inequalities in
the output and running $d$ successive elimination steps can yield a
double exponential complexity of $4(n/4)^{2^d}$
\cite{Williams1986}. This rate of increase can be kept under control
when the resulting set of inequalities has many redundancies. This
happens in the Shannon case where the causal constraints are simple
equalities and the system of 246 Shannon constraints plus 6 Shannon
entropic causal constraints reduces to a system of just 91 independent
inequalities before the FM elimination. In the Tsallis case, no
reduction of the system of inequalities is possible in general due to
the nature of the causal constraints. The fact that the Tsallis
entropic causal constraints are inequality constraints rather than
equalities also contributes to the computational difficulty since each
independent equality constraint in effect reduces the dimension of the
problem by~1.

We also tried the same procedure on the bipartite Bell causal
structure (Figure \ref{fig: Bell}), again for $q=2$ and binary
variables. Here, starting with the full set of 53 causal constraints,
again resulted in the program running for over a week without nearing
the end, and a similar result was obtained when starting only with
8--10 causal constraints. While starting with fewer causal constraints
such as the 5 conditional independence constraints (one for each node)
resulted in a terminating program, no non-trivial entropic
inequalities were obtained (i.e., we only obtained constraints
corresponding to Shannon constraints or causal constraints that follow directly from $d$ separation).\footnote{For example, we were able to obtain
$I_2(A:BY)\leq \frac{7}{16}$ and $I_2(B:AX)\leq \frac{7}{16}$, while,
in the case of binary variables and $q=2$, the independences in the
DAG together with Theorem~\ref{theorem: mibound} imply
$I_2(A:BY)\leq \frac{6}{16}$ and $I_2(B:AX)\leq \frac{6}{16}$, which
are the Tsallis entropic equivalents of the two non-signalling
constraints.}

\section{New Tsallis entropic inequalities for the triangle causal structure}\label{sec: newineq}
Despite the limitations encountered in applying the entropy vector
method to Tsallis entropies (Section~\ref{ssec: tsalentvec}), here we
find new Tsallis entropic inequalities for the triangle causal
structure for all $q\geq1$ by using known inequalities for the Shannon entropy~\cite{Chaves2014} and the causal constraints derived in Section~\ref{sec: tsalcaus}. Using the entropy vector method for Shannon entropies, the following three classes of entropic inequalities were obtained for the triangle causal structure (Figure~\ref{fig: triangle}) in~\cite{Chaves2014}\footnote{Note that a tighter  entropic characterization was found in~\cite{Weilenmann2018} based on non-Shannon inequalities, and that the techniques introduced here could also be applied to these.}. Including all permutations of $X$, $Y$ and $Z$, these yield 7 inequalities.

\begin{subequations}
\begin{equation}
\label{eq: Shineq1}
    -H(X)-H(Y)-H(Z)+H(XY)+H(XZ)\geq 0,
\end{equation}
\begin{equation}
\label{eq: Shineq2}
    -5H(X)-5H(Y)-5H(Z)+4H(XY)+4H(XZ)+4H(YZ)-2H(XYZ)\geq 0,
\end{equation}
\begin{equation}
\label{eq: Shineq3}
    -3H(X)-3H(Y)-3H(Z)+2H(XY)+2H(XZ)+3H(YZ)-H(XYZ)\geq 0.
\end{equation}
\end{subequations}

By replacing the Shannon entropy $H()$ with the Tsallis entropy $S_q()$
on the left hand side of these inequalities and minimizing the
resultant expression over our outer approximation to the classical
Tsallis entropy cone for the triangle causal structure, one can obtain
valid Tsallis entropic inequalities for this causal structure. More
precisely, the outer approximation to the classical Tsallis entropy
cone for the triangle is characterised by the $6+6(6-1)2^{6-3}=246$
independent Shannon constraints (monotonicity and strong subadditivity
constraints) and the $126$ causal constraints (one for each
conditional independence implied by the causal structure). To perform
this minimization we used {\sc
  LPAssumptions}~\cite{LPAssumptions}, a linear program
solver in Mathematica that implements the simplex method allowing for
unspecified variables. In our case, we assumed that the dimensions of all the unobserved nodes
($A$,$B$ and $C$) are equal to $d_u$ and those of all the observed
nodes ($X$, $Y$ and $Z$) is $d_o$, and so the unspecified variables are $q\geq 1$,
$d_u\geq 2$ and $d_o\geq 2$. We obtained the following 
Tsallis entropic inequalities for the triangle.
\begin{subequations}
\begin{equation}
\label{eq: Tsineq1}
    -S_q(X)-S_q(Y)-S_q(Z)+S_q(XY)+S_q(XZ)\geq B_1(q,d_o,d_u), 
\end{equation}

\begin{align}
\begin{split}
\label{eq: Tsineq2}
   -5S_q(X)-5S_q(Y)-5S_q(Z)+4S_q(XY)+4S_q(XZ)+4S_q(YZ)-2S_q(XYZ)\\
    \geq B_2(q,d_o,d_u):=\max\big(B_{21}(q,d_o,d_u),B_{22}(q,d_o,d_u)\big), 
\end{split}
\end{align}

\begin{equation}
\label{eq: Tsineq3}
    -3S_q(X)-3S_q(Y)-3S_q(Z)+2S_q(XY)+2S_q(XZ)+3S_q(YZ)-S_q(XYZ)\geq B_3(q,d_o,d_u), 
\end{equation}
\end{subequations}
where,
\begin{subequations}
\begin{equation}
\label{eq: Bndineq1}
    B_1(q,d_o,d_u)=-\frac{1}{q-1}\Bigg(1-d_o^{1-q}\Bigg)\Bigg(2-d_o^{1-q}-d_u^{1-q}\Bigg),
\end{equation}

\begin{align}
\begin{split}
\label{eq: Bndineq2}
    B_{21}(q,d_o,d_u)&=-\frac{1}{q-1}\Bigg(11+d_u^{3-3q}+6d_o^{2-2q}+3d_o^{1-q}d_u^{1-q}-6d_u^{1-q}-15d_o^{1-q}\Bigg),\\
    B_{22}(q,d_o,d_u)&=-\frac{1}{q-1}\Bigg(10+d_o^{1-q}d_u^{3-3q}+5d_o^{2-2q}+2d_o^{1-q}d_u^{1-q}-5d_u^{1-q}-13d_o^{1-q}\Bigg),
\end{split}
\end{align}

\begin{equation}
\label{eq: Bndineq3}
    B_3(q,d_o,d_u)=-\frac{1}{q-1}\Bigg(6+d_o^{1-q}d_u^{2-2q}+3d_o^{2-2q}+d_o^{1-q}d_u^{1-q}-3d_u^{1-q}-8d_o^{1-q}\Bigg).
\end{equation}
\end{subequations}
Note that
$\lim_{q\rightarrow 1}B_1=\lim_{q\rightarrow 1}B_2=\lim_{q\rightarrow
  1}B_3=0$ $\forall d_u,d_o \geq 2$, recovering the original
inequalities for Shannon entropies (Equations~(\ref{eq:
  Shineq1})--(\ref{eq: Shineq3})) as a special case.

In~\cite{Rosset2017}, an upper bound on the dimensions of classical unobserved systems needed to reproduce a set of observed correlations is derived in terms of the dimensions of the observed systems. In the case of the triangle causal structure with
$d_X=d_Y=d_Z=d_o$ and $d_A=d_B=d_C=d_u$ as considered here, the result
of~\cite{Rosset2017} implies that all classical correlations $P_{XYZ}$
can be reproduced by using hidden systems of dimension at most $d_o^3-d_o$. Since the dimension
of the unobserved systems is unknown, it makes sense to take the
minimum of the derived bounds over all $d_u$ between $2$ and $d_o^3-d_o$. By
taking their derivative, one can verify that for $q>1$ each of the
functions $B_1$, $B_{21}$, $B_{22}$ and $B_3$ is monotonically
decreasing in $d_o$ and $d_u$, and hence that the minimum is obtained
for $d_u=d_o^3-d_o$ for any given $d_o\geq
2$. It follows that for all $q>1$ and $d_o\geq2$ relations of the same
form as Equations~\eqref{eq: Tsineq1}--\eqref{eq: Tsineq3} hold, with
the quantities on the right hand sides replaced by
\begin{subequations}
\begin{align}
\begin{split}
\label{eq: B2ndineq1}
   B_1^{*}(q,d_o)&=B_1(q,d_o,d_o^3-d_o)\\
  & =-\frac{1}{q-1}\Bigg(2+d_o^{2-2q}-3d_o^{1-q}+d_o(-d_o + d_o^3)^{-q}-d_o^3 (-d_o + d_o^3)^{-q}-d_o^{2 - q} (-d_o + d_o^3)^{-q}+d_o^{4 - q} (-d_o + d_o^3)^{-q}\Bigg)
  \end{split}
\end{align}

\begin{align}
\begin{split}
\label{eq: B2ndineq2}
    B_{21}^{*}(q,d_o)&=B_{21}(q,d_o,d_o^3-d_o)\\
    &=-\frac{1}{q-1}\Bigg(11 + 6 d_o^{2 - 2 q} - 
 15 d_o^{1 - q} + (-d_o + d_o^3)^{3 - 3 q} - 
 6 (-d_o + d_o^3)^{1 - q} + 
 3 d_o^{1 - q} (-d_o + d_o^3)^{1 - q}\Bigg),\\
    B_{22}^{*}(q,d_o)&=B_{22}(q,d_o,d_o^3-d_o)\\
    &=-\frac{1}{q-1}\Bigg(10 + 5 d_o^{2 - 2 q} - 13 d_o^{1 - q} + 
 d_o^{1 - q} (-d_o + d_o^3)^{3 - 3 q} - 
 5 (-d_o + d_o^3)^{1 - q} + 
 2 d_o^{1 - q} (-d_o + d_o^3)^{1 - q}\Bigg),
\end{split}
\end{align}

\begin{equation}
\label{eq: B2ndineq3}
    B_3^{*}(q,d_o)=B_3(q,d_o,d_o^3-d_o)=-\frac{1}{q-1}\Bigg(6 + 3 d_o^{2 - 2 q} - 8 d_o^{1 - q} + 
 d_o^{1 - q} (-d_o + d_o^3)^{2 - 2 q}- 
 3 (-d_o + d_o^3)^{1 - q} + 
 d_o^{1 - q} (-d_o + d_o^3)^{1 - q}\Bigg)\,.
\end{equation}
\end{subequations}
A quantum violation of any of these bounds would imply that no
unobserved classical systems of arbitrary dimension could
reproduce those quantum correlations.

\begin{remark}
\label{remark: boundlimit}
Because they are monotonically decreasing, the bounds for $d_u=d_o^3-d_o$ are not as tight as the $d_u$-dependent bounds for general $q>1$. Nevertheless, as $q\to 1$, all the bounds $B^*(q,d_o)$ tend to 0, reproducing the known result of~\cite{Fritz13} for the Shannon case.
\end{remark}

\begin{remark}
In some cases it may be interesting to show quantum violations of these inequalities for low values of $d_u$, hence ruling out classical explanations with hidden systems of low dimensions, while possibly leaving open the case of arbitrary classical explanations. This would be interesting if it could be established that using hidden quantum systems allows for much lower dimensions than for hidden classical systems, for example.
\end{remark}

\subsection{Looking for quantum violations}\label{ssec: qviol}
It is known that the triangle causal structure (Figure~\ref{fig:
  triangle}) admits non-classical correlations such as Fritz's
distribution~\cite{Fritz2012}. The idea behind this distribution is to
embed the CHSH game in the triangle causal structure such that
non-locality for the triangle follows from the non-locality of the
CHSH game. To do so, $C$ is replaced by the sharing of a maximally
entangled pair of qubits, and $A$ and $B$ are taken to be uniformly
random classical bits. The observed variables $X$, $Y$ and $Z$ in
Figure~\ref{fig: triangle} are taken to be pairs of the form
$X:=(\tilde{X},B)$, $Y:=(\tilde{Y},A)$ and $Z:=(A,B)$, where $\tilde{X}$
and $\tilde{Y}$ are generated by measurements on the halves of the
entangled pair with $B$ and $A$ used to choose the settings such that
the joint distribution $P_{\tilde{X}\tilde{Y}|BA}$ maximally violates
a CHSH inequality.  By a similar post-processing of other non-local
distributions in the bipartite Bell causal structure (Figure~\ref{fig:
  Bell}) such as the Mermin-Peres magic square game~\cite{Mermin1990,
  Peres1990} and chained Bell inequalities~\cite{BraunsteinCaves88},
one can obtain other non-local distributions in the triangle that
cannot be reproduced using classical systems.  We explore whether any
of these violate any of our new inequalities.

Since the values of $B_i(q,d_o,d_u)$ are monotonically decreasing in
$d_o$ and $d_u$, if a distribution realisable in a quantum causal
structure does not violate the bounds~\eqref{eq: Tsineq1}--\eqref{eq:
  Tsineq3} for all $q\geq1$ and some fixed values of $d_o$ and $d_u$,
then no violations are possible for $d_o'>d_o$, $d_u'>d_u$. We
therefore take the smallest possible values of $d_o$ and $d_u$ when
showing that a particular distribution cannot violate any of the bounds.
  
For Fritz's distribution~\cite{Fritz2012}, $C$ is a two-qubit maximally
entangled state, $A$ and $B$ are binary random variables while $X$, $Y$ and $Z$ are random variables of dimension 4, i.e., the actual observed dimensions are
$(d_X,d_Y,d_Z)=(4,4,4)$ in this case. Here we see that taking
$d_o=4$ and the smallest possible $d_u$ which is $d_u=2$, the left
hand sides of Equations~\eqref{eq: Tsineq1}--\eqref{eq: Tsineq3}
evaluated for Fritz's distribution do not violate the corresponding bounds $B_i(q,d_o=4,d_u=2)$ for any
$q\geq 1$. This means that it is not possible to detect any quantum
advantage of this distribution (even over the case where the
unobserved systems are classical bits) using this method, and
automatically implies that it cannot violate the bounds
$B_i(q,d_o=4,d_u)$ for $d_u\geq 2$.

We also considered the chained Bell and magic square correlations
embedded in the triangle causal structure analogously to the case
discussed above.  For each of these, we define $d^i$ to be the
smallest value of $d_o$ for which the bound $B_i(q,d_o=d^i,d_u=2)$
cannot be violated for any $q>1$. The values of $d^i$ are given in
Table~\ref{table: violations} for the different cases of the chained
Bell correlations and the magic square. Since the values of $d^i$ are
always lower than the smallest of the observed dimensions in the
problem, and due to the monotonicity of the bounds it follows that
none of these quantum distributions violate any of our inequalities
when the observed dimension is set to $d_o^{\min}$.

\begin{table}[t]
\begin{center}
\begin{tabular}{ ccccc } 
\hline
\hline
\quad & \multicolumn{3}{c}{$d^i$} & smallest\\
\cline{2-4}
Scenario & Ineq.~\eqref{eq: Tsineq1} ($i=1$)\ \ \ \  & Ineq.~\eqref{eq: Tsineq2} ($i=2$)\ \ \ \  & Ineq.~\eqref{eq: Tsineq3} ($i=3$)\ & \ \ observed dim. ($d_o^{\min}$)\\
 \hline
$N=2$ & 2 & 2 & 2 & 4 \\ 
$N=3$ & 3 & 2 & 3 & 6 \\ 
$N=4$ & 4 & 2 & 4 & 8 \\ 
$N=5$ & 5 & 2 & 5 & 10 \\ 
$N=6$ & 6 & 2 & 6 & 12 \\ 
$N=7$ & 7 & 2 & 7 & 14 \\ 
$N=8$ & 8 & 2 & 8 & 16 \\ 
$N=9$ & 9 & 3 & 9 & 18 \\ 
$N=10$ & 10 & 3 & 10 & 20 \\ 
Magic Sq. & 4 & 2 & 4 & 9 \\ 
 \hline
 \hline
\end{tabular}
\end{center}
\caption{\textbf{Values of $d^i$ for the chained Bell and magic square
    correlations embedded in the triangle causal structure.} The
  values of $N$ correspond to the number of inputs per party in the
  chained Bell inequality, which always has two outputs per party (the
  $N=2$ case corresponds to Fritz's
  distribution~\cite{Fritz2012}). When embedded in the triangle, the
  number of outcomes of the observed nodes are
  $(d_X,d_Y,d_Z)=(2N,2N,N^2)$.  The last column of the table gives the
  minimum of the observed node dimensions $(d_X,d_Y,d_Z)$ for each
  $N$, which is simply $2N$. For the magic square, the dimensions
  $(d_X,d_Y,d_Z)$ are $(12,12,9)$. In all cases, the minimum value of
  $d^i$ such that the Inequalities~(\ref{eq: Tsineq1})--(\ref{eq:
    Tsineq3}) with bounds $B_i(q,d_o=d^i,d_u=2)$ are not violated for
  any $q\geq 1$ is less than the minimum observed dimension
  $d_o^{\min}$, and hence no violations of~\eqref{eq:
    Tsineq1}--\eqref{eq: Tsineq3} could be found for the relevant case
  with $d_o=d^{\min}_o$.}
\label{table: violations}
\end{table}

We further checked for violations of Inequalities~(\ref{eq:
  Tsineq1})--(\ref{eq: Tsineq3}) by sampling random quantum states for
the systems $A$, $B$ and $C$ and random quantum measurements whose
outcomes would correspond to the classical variables $X$, $Y$ and
$Z$. The value of $q$ was also sampled randomly between $1$ and
$100$. We considered the cases where the shared systems were pairs of
qubits with 4 outcome measurements ($d_X=d_Y=d_Z=4$) and qutrits with
9 outcome measurements ($d_X=d_Y=d_Z=9$) but were unable to find
violations of any of the inequalities even for the bounds with the
$d_o=4, d_u=2$ (two qubit case) and $d_o=9, d_u=2$ (two qutrit case),
i.e., the bounds obtained when the unobserved systems are classical
bits.

\begin{remark}
In the derivation of Inequalities~(\ref{eq: Tsineq1})--(\ref{eq:
  Tsineq3}), we set the dimensions of the observed nodes
$X$, $Y$ and $Z$ to all be equal and those of the unobserved
nodes $A$, $B$ and $C$ to also all be equal. One could in principle
repeat the same procedure taking different dimensions for all 6
variables but we found the computational procedure too
demanding. However, Table~\ref{table: violations} shows that even when we consider the bounds $B_i(q,d_o,d_u)$ with $d_o$ and $d_u$ much smaller than the actual dimensions, known non-local distributions in the triangle considered in Table~\ref{table: violations} do not violate the corresponding Inequalities~(\ref{eq: Tsineq1})--(\ref{eq:
  Tsineq3}) for any $q\geq 1$. Since the bounds are monotonically decreasing in $d_u$ and $d_o$, even if we obtained the general bounds for arbitrary dimensions of $X$, $Y$, $Z$, $A$, $B$ and $C$, they would be strictly weaker than $B_i(q,d^i,d_u=2)$ $\forall i\in \{1,2,3\}, q\geq 1$ and can certainly not be violated by these distributions.
\end{remark}

\section{Discussion}\label{sec: causaldiscussion}
We have investigated the use of Tsallis entropies within the entropy
vector method to causal structures, showing how causal constraints
imply bounds on the Tsallis entropies of the variables
involved. Although Tsallis entropies for $q\geq1$ possess many
properties that aid their use in the entropy vector method, the nature
of the causal constraints makes the problem significantly more
computationally challenging than in the case of Shannon entropy.  This
meant that we were unable to complete the desired computations in the
former case, even for some of the simplest causal structures.
Nevertheless, we were able to derive new classical causal constraints
expressed in terms of Tsallis entropy by analogy with known Shannon
constraints, but were unable to find cases where these
were violated, even using quantum distributions that are known not to
be classically realisable.  This mirrors an analogous result for
Shannon entropies~\cite{Weilenmann2018}.

Tsallis entropies are known to give improvements~\cite{Wajs15} in
cases that involve post-selection.  While post-selection cannot be
used for general causal structures (including the triangle), it would
be interesting to understand whether using Tsallis entropy helps in
other cases for which post-selection is applicable.

One could also investigate whether other entropic quantities could be
used in a similar way. The R\'enyi entropies of order $\alpha$ do not
satisfy strong subadditivity for $\alpha \neq 0,1$, while the R\'enyi
as well as the min and max entropies fail to obey the chain rules for
conditional entropies.  Thus, use of these in the entropy vector method, would
require an entropy vector with components for all possible conditional
entropies as well as unconditional ones, considerably increasing the
dimensionality of the problem, which we would expect to make the
computations harder.\footnote{In some cases, not having a chain rule
  may not be prohibitive~\cite{WeilenmannGPT}.}
  
Further, one could consider using algorithms other than Fourier-Motzkin
elimination to obtain non-trivial Tsallis entropic constraints
over observed nodes starting from the Tsallis cone over all the nodes
(see e.g.,~\cite{Gl_le_2018}). These could in principle yield
solutions even in cases where FM elimination becomes
intractable. However, we found that the FM elimination procedure
became intractable even when starting out with only a small subset of
the Tsallis entropic causal constraints for a simple causal structure
such as the Bell one. 
This suggests that the difficulty is not only with the number of constraints, but also with their nature (in particular, that they are not equalities and depend non-trivially on the dimensions). Consequently, we bypassed FM elimination and used an alternative technique to obtain new Tsallis entropic inequalities for the Triangle causal structure (Section~\ref{sec: newineq}). 

It is also worth noting that the following alternative definition of the Tsallis
conditional entropy was proposed in~\cite{ABE2001157}.
\begin{equation}
    \tilde{S_q}(X|Y)=\frac{1}{1-q}\frac{\sum_yp_y^qS_q(X|Y=y)}{\sum_y p_y^q}=\frac{1}{1-q}\left(\frac{\sum_{x,y}p_{xy}^q}{\sum_y p_y^q}-1\right).
\end{equation}
Using this definition, Tsallis entropies would satisfy the same causal
constraints as the Shannon entropy (Equation~\eqref{eq:
  shancausmain}). However, the conditional entropies defined this way
do not satisfy the chain rules of Equation~\eqref{eq: chain2} but
instead obey a non-linear chain rule,
$S_q(XY)=S_q(X)+S_q(Y|X)+(1-q)S_q(X)S_q(Y|X)$~\cite{ABE2001157}. This
would again mean that conditional entropies would need to be included
in the entropy vector.  Furthermore, since Fourier-Motzkin elimination
only works for linear constraints, an alternative algorithm would be
required to use this chain rule in conjunction with the entropy vector
method.

That the inequalities for Tsallis entropy derived in this work depend
on the dimensions of the systems involved could be used to certify
that particular observed correlations in a classical causal structure
require a certain minimal dimension of unobserved systems to be
realisable. To show this would require showing that
classically-realisable correlations violate one of the
inequalities for some $d_u$.  Such bounds would then complement
the upper bounds of~\cite{Rosset2017}.  However, in some cases we know our bounds are not tight enough to do this.  As a simple example, within the triangle causal structure we tried taking $X=(X_B,X_C)$, $Y=(Y_A,Y_C)$ and $Z=(Z_A,Z_B)$ with $X_B=Z_B$, $X_C=Y_C$ and $Y_A=Z_A$ where each are uniformly distributed with cardinality $D$, for $D\in\{3,\ldots,10\}$.  In this case it is clear that the correlations cannot be achieved with classical unobserved systems with $d_u=2$. Taking the bound with $d_u=2$ and $d_o=D^2$ no violations of~\eqref{eq: Tsineq1}--\eqref{eq: Tsineq3} were seen by plotting the graphs for $q\in[1,20]$, for the range of $D$ above. Hence, our bounds are too loose to certify lower bounds on $d_u$ in this case.

While our analysis highlights significant drawbacks of using Tsallis
entropies for analysing causal structures, it does not rule out the
possibility of Tsallis entropies being able to detect the
classical-quantum gap\footnote{Proving that Tsallis entropies are
  unable to do this would also be difficult.  For instance, the proof
  of~\cite{Weilenmann16} that Shannon entropies are unable to detect
  the gap in line-like causal structures involves first characterising
  the marginal polytope through Fourier-Motzkin elimination, which
  itself proved to be computationally infeasible with Tsallis
  entropies even for the simplest line-like causal structure, the
  bipartite Bell scenario.} in these causal structures, or others.  To overcome the difficulties we encountered we would either need increased computational power, or the use of new, alternative techniques for analysing causal structures (with or without entropies).

\begin{acknowledgements}
  We thank Mirjam Weilenmann and Elie Wolfe for useful discussions and
  also thank Mirjam Weilenmann for sharing some Mathematica code. VV
  acknowledges financial support from the Department of Mathematics,
  University of York. RC is supported by EPSRC's Quantum
  Communications Hub (grant number EP/M013472/1) and by an EPSRC First
  Grant (grant number EP/P016588/1).
\end{acknowledgements}




\appendix


\section{Quantum generalisations of Theorems \ref{theorem: mibound} and \ref{theorem: causal}}\label{appendix: quantum}

In the following, for a (finite dimensional) Hilbert space
$\mathcal{H}$, we use $\mathcal{L}(\mathcal{H})$ to represent the set
of linear operators on $\mathcal{H}$, $\mathcal{P}(\mathcal{H})$ to
represent the set of positive (semi-definite) operators on
$\mathcal{H}$, and $\mathcal{S}(\mathcal{H})$ to denote the set of
density operators on $\mathcal{H}$ (positive and trace 1).

Tsallis entropies as defined for classical random variables in
Section~\ref{sec: tsallis} are easily generalised to the quantum case
by replacing the probability distribution by a density matrix
\cite{Hu2006}. For a quantum system described by the density matrix
$\rho \in \mathcal{S}(\mathcal{H})$ on the Hilbert space $\mathcal{H}$
and $q>0$, the quantum Tsallis entropy is defined by
\begin{equation}
\label{eq: qtsallis1}
 S_q(\rho)=
 \begin{cases}
   -\Tr \rho^q\ln_q \rho, & q\neq 1.\\
    H(\rho), & q=1.
  \end{cases}
\end{equation}
where $H(\rho)= -\Tr \rho \ln \rho$ is the von-Neumann entropy of
$\rho$ and $\ln_q (x)=\frac{x^{1-q}-1}{1-q}$ as in Section~\ref{sec:
  tsallis}.\footnote{Analogously to the classical case we keep it
  implicit that if $\rho$ has any 0 eigenvalues these do not
  contribute to the trace.}

Given a density operator $\rho_{AB}\in\cS(\cH_A\ot\cH_B)$, the
conditional quantum Tsallis entropy of $A$ given $B$ can then be
defined by $S_q(A|B)_\rho=S_q(AB)-S_q(B)$, the mutual information
between $A$ and $B$ by $I_q(A:B)_\rho=S_q(A)+S_q(B)-S_q(AB)$, and for
$\rho_{ABC}\in\cS(\cH_A\ot\cH_B\ot\cH_C)$ the conditional Tsallis
information between $A$ and $B$ given $C$ is defined by
$I_q(A:B|C)_\rho=S_q(A|C)+S_q(B|C)-S_q(AB|C)$.  In this section we use
$d_S$ to represent the dimensions of the Hilbert space
$\mathcal{H}_S$.

The following properties of quantum Tsallis entropies will be useful for what follows.
\begin{enumerate}
    \item \textbf{Pseudo-additivity \cite{Tsallis1988}:} \label{prop:Qpseudo}If $\rho_{AB}=\rho_A\ot\rho_B$,
      then 
    \begin{equation}
        \label{eq:Qpseudoadd}
        S_q(AB)=S_q(A)+S_q(B)+(1-q)S_q(A)S_q(B)\,.
    \end{equation}
  \item \label{prop:Qupper} \textbf{Upper bound \cite{Audenaert2007}:} For all $q>0$, we have
    $S_q(A)\leq\ln_qd_A$ and equality is achieved if and only
    if $\rho_A=\id_A/d_A$.
    \item \textbf{Subadditivity \cite{Audenaert2007}:} For any density matrix $\rho_{AB}$ with marginals $\rho_A$ and $\rho_B$, the following holds for all $q\geq 1$,
    \begin{equation}
        \label{eq:Qsubadd}
        S_q(AB)\leq S_q(A)+S_q(B)\,.
    \end{equation}
    
\end{enumerate}

Using these we can generalize Theorem~\ref{theorem: mibound} to the
quantum case. This corresponds to the causal structure with two
independent quantum nodes and no edges in between them.

\begin{theorem}
\label{theorem: qmibound}
For all bipartite density operators in product form, i.e., $\rho_{AB}=\rho_A\otimes \rho_B$ with
$\rho_A\in\mathcal{S}(\mathcal{H}_A)$ and $\rho_B\in\mathcal{S}(\mathcal{H}_B)$, the quantum Tsallis mutual information $I_q(A:B)_{\rho}$ is upper bounded as follows for all $q>0$
\begin{equation*}
    I_q(A:B)_{\rho}\leq f(q,d_A,d_B)\,,
\end{equation*}
where the function $f(q,d_A,d_B)$ is given by
\begin{equation*}
    f(q,d_A,d_B)=\frac{1}{(q-1)}\left(1-\frac{1}{d_A^{q-1}}\right)\left(1-\frac{1}{d_B^{q-1}}\right)=(q-1)\ln_qd_A\ln_qd_B\,.
\end{equation*}
The bound is saturated if and only if
$\rho_{AB}=\frac{\mathds{1}_A}{d_A}\otimes\frac{\mathds{1}_B}{d_B}$.
\end{theorem}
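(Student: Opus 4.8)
The plan is to mirror the proof of Theorem~\ref{theorem: mibound} in the quantum setting, since for a product state all the relevant spectral data factorise. First I would observe that the eigenvalues of $\rho_{AB}=\rho_A\otimes\rho_B$ are exactly the products of the eigenvalues of $\rho_A$ and of $\rho_B$, so the quantum Tsallis entropy of $\rho_{AB}$ coincides with the classical Tsallis entropy of the product of the two eigenvalue distributions. This lets me invoke quantum pseudo-additivity (Property~\ref{prop:Qpseudo}) directly: for $\rho_{AB}=\rho_A\otimes\rho_B$ we have $S_q(AB)=S_q(A)+S_q(B)+(1-q)S_q(A)S_q(B)$, whence
\begin{equation*}
I_q(A:B)_\rho=S_q(A)+S_q(B)-S_q(AB)=(q-1)\,S_q(A)\,S_q(B).
\end{equation*}
This identity is the crux of the argument: it collapses the whole problem to bounding a single product of one-system entropies, and crucially it is an exact equality (no subadditivity, i.e.\ the third listed property, is needed precisely because the state is genuinely product).

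Next I would bound each factor using the quantum upper bound (Property~\ref{prop:Qupper}), namely $S_q(A)\leq\ln_q d_A$ and $S_q(B)\leq\ln_q d_B$ for all $q>0$, together with the non-negativity of the Tsallis entropy. For $q>1$ the prefactor $(q-1)$ is positive and both entropies are non-negative, so multiplying the two bounds yields
\begin{equation*}
I_q(A:B)_\rho=(q-1)\,S_q(A)\,S_q(B)\leq(q-1)\,\ln_q d_A\,\ln_q d_B=f(q,d_A,d_B),
\end{equation*}
using the equivalent closed forms of $f$ given in the statement; the case $q=1$ is the von~Neumann limit, where both sides vanish.

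For the saturation claim I would note that, since $(q-1)>0$ and $0\leq S_q(A)\leq\ln_q d_A$ (and likewise for $B$), equality in the displayed bound forces $S_q(A)=\ln_q d_A$ \emph{and} $S_q(B)=\ln_q d_B$ simultaneously; by the equality case of Property~\ref{prop:Qupper} this occurs exactly when $\rho_A=\mathds{1}_A/d_A$ and $\rho_B=\mathds{1}_B/d_B$, i.e.\ when $\rho_{AB}=\frac{\mathds{1}_A}{d_A}\otimes\frac{\mathds{1}_B}{d_B}$. I do not expect a genuine quantum-specific obstacle here: the argument is a transcription of the classical one, and the only point requiring care is the sign of $(q-1)$, since the clean ``upper bound'' reading holds for $q>1$ and one must track this prefactor consistently when passing from $S_q(A)S_q(B)$ to its maximiser. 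The two ingredients that make the quantum case work—exactness of pseudo-additivity on product states and the maximum-entropy characterisation $S_q(\rho)\leq\ln_q d$ with equality only at the maximally mixed state—are exactly Properties~\ref{prop:Qpseudo} and~\ref{prop:Qupper}, so the proof reduces to assembling them in the order above.
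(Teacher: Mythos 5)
Your argument is correct for $q\geq 1$ and is essentially identical to the paper's proof, which consists of the single remark that the classical proof of Theorem~\ref{theorem: mibound} carries over verbatim with Properties~\ref{prop:Qpseudo} and~\ref{prop:Qupper} in place of their classical counterparts; you have simply written out that transcription explicitly, including the equality case. The caution you raise about the sign of $(q-1)$ is not merely presentational, however: for $0<q<1$ the passage from $S_q(A)\leq \ln_q d_A$ and $S_q(B)\leq \ln_q d_B$ to $(q-1)S_q(A)S_q(B)\leq (q-1)\ln_q d_A\ln_q d_B$ reverses direction, and indeed a pure product state (with $S_q(A)=S_q(B)=0$) gives $I_q(A:B)_\rho=0$ while $f(q,d_A,d_B)<0$ there, so the bound as stated fails on part of the claimed range $q>0$. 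This defect is inherited from the paper's classical proof (which asserts the chain of inequalities for all $q\geq 0$) rather than introduced by you; your implicit restriction to $q\geq 1$ is the honest reading, and is all that is used elsewhere in the paper.
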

\begin{proof}
  The proof goes through in the same way as the proof of
  Theorem~\ref{theorem: mibound} for the classical case
  (Properties~\ref{prop:Qpseudo} and~\ref{prop:Qupper} are analogous
  to those needed in the classical proof).
\end{proof}

Next, we generalise Theorem~\ref{theorem: causal} and
Corollaries~\ref{corollary: causal} and~\ref{corollary:
  mainconstraint}. This would correspond to the causal constraints on
quantum Tsallis entropies implied by the common cause causal structure
with $C$ being a complete common cause of $A$ and $B$ (which share no
causal relations among themselves). Here, one must be careful in
precisely defining the conditional mutual information and interpreting
it physically. For example, if the common case $C$ were quantum and
the nodes $A$ and $B$ were classical outcomes of measurements on $C$,
then $A$, $B$ and $C$ do not coexist and there is no joint state
$\rho_{ABC}$ in such a case. This is a significant difference in
quantum causal modelling compared to the classical case, and there
have been several proposals for how do deal with it \cite{Leifer2013,
  Costa2016, Allen2017, Pienaar2019}. In the following we
consider two cases:
\begin{enumerate}
\item When $C$ is classical, all 3 systems coexist and $\rho_{ABC}$
  can be described by a classical-quantum state (See
  Theorem~\ref{theorem: qcausal1}).
\item When $C$ is quantum, one approach is to view $\rho_{ABC}$ not as
  the joint state of the 3 systems but as being related to the
  Choi-Jamiolkowski representations of the quantum channels from $C$
  to $A$ and $B$ (See Section~\ref{ssec: noncoexisting}) as done in
  \cite{Allen2017}.
\end{enumerate}

The following Lemma proven in~\cite{Kim2016} is required for our
generalization of Theorem~\ref{theorem: causal} in the first case.
\begin{lemma}[\cite{Kim2016}, Lemma~1]
\label{lemma: condtsal}
Let $\mathcal{H}_A$ and $\mathcal{H}_Z$ be two Hilbert spaces and $\{\ket{z}\}_z$ be an orthonormal basis of $\mathcal{H}_Z$. Let $\rho_{AZ}$ be classical on $\mathcal{H}_Z$ with respect to this basis i.e., 
$$\rho_{AZ}=\sum_z p_z \rho_A^{(z)} \otimes \ket{z}\bra{z},$$
where $\sum_zp_z=1$ and $\rho_A^{(z)} \in S(\mathcal{H}_A)\ \forall z$. Then for all $q>0$, $$S_q(AZ)_{\rho}=\sum_zp_z^qS_q(\rho_A^{(z)})+S_q(Z),$$
where $S_q(Z)$ is the classical Tsallis entropy of the variable $Z$ distributed according to $P_Z$.
\end{lemma}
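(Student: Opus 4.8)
The plan is to reduce the computation to the closed-form identity $S_q(\rho)=\frac{1-\Tr\rho^q}{q-1}$, valid for $q\neq 1$, which follows directly from the definition~\eqref{eq: qtsallis1}: substituting $\ln_q\rho=\frac{\rho^{1-q}-\id}{1-q}$ and using $\Tr\rho=1$ gives $-\Tr\rho^q\ln_q\rho=-\frac{1}{1-q}(\Tr\rho-\Tr\rho^q)=\frac{1-\Tr\rho^q}{q-1}$. First I would exploit the block-diagonal structure of the state. Since $\{\ket{z}\}_z$ is orthonormal, $\rho_{AZ}=\sum_z p_z\,\rho_A^{(z)}\ot\proj{z}$ is block-diagonal with respect to the $Z$ register. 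Diagonalising each conditional state as $\rho_A^{(z)}=\sum_i\lambda_i^{(z)}\ketbra{e_i^{(z)}}{e_i^{(z)}}$, the full spectrum of $\rho_{AZ}$ is read off immediately as $\{p_z\lambda_i^{(z)}\}_{z,i}$, with eigenvectors $\ket{e_i^{(z)}}\ot\ket{z}$.

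Next I would compute $\Tr\rho_{AZ}^q$ from this spectrum. The key point is that the double sum factorises, $\Tr\rho_{AZ}^q=\sum_{z,i}(p_z\lambda_i^{(z)})^q=\sum_z p_z^q\sum_i(\lambda_i^{(z)})^q=\sum_z p_z^q\,\Tr(\rho_A^{(z)})^q$. Applying the closed-form identity both to $\rho_A^{(z)}$ (giving $\Tr(\rho_A^{(z)})^q=1-(q-1)S_q(\rho_A^{(z)})$) and to the classical distribution $P_Z$ (giving $\sum_z p_z^q=1-(q-1)S_q(Z)$), I would substitute into $S_q(AZ)=\frac{1}{q-1}\big(1-\sum_z p_z^q\,\Tr(\rho_A^{(z)})^q\big)$. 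Collecting terms then yields $S_q(AZ)=\frac{1-\sum_z p_z^q}{q-1}+\sum_z p_z^q S_q(\rho_A^{(z)})=S_q(Z)+\sum_z p_z^q S_q(\rho_A^{(z)})$, which is the claimed formula for $q\neq 1$.

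For the remaining $q=1$ case I would either invoke continuity in $q$ (the limit $q\to 1$ of both sides exists and reproduces the von Neumann quantities) or argue directly: with eigenvalues $p_z\lambda_i^{(z)}$, the logarithm splits as $\ln(p_z\lambda_i^{(z)})=\ln p_z+\ln\lambda_i^{(z)}$, and using the normalisations $\sum_i\lambda_i^{(z)}=1$ and $\sum_z p_z=1$ recovers $H(AZ)=H(Z)+\sum_z p_z H(\rho_A^{(z)})$. I do not expect a genuine obstacle here; the only step requiring care is the factorisation of $\Tr\rho_{AZ}^q$, which relies essentially on the orthonormality of $\{\ket{z}\}_z$ to guarantee that there are no cross terms between distinct $z$-blocks in the spectrum. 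The subsequent rearrangement in the $q\neq 1$ case is purely algebraic once the spectrum is in hand.
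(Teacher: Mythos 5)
Your proof is correct. Note that the paper itself does not prove this lemma --- it simply imports it as Lemma~1 of the cited reference --- so there is no in-paper argument to compare against; your write-up is a valid self-contained derivation of the cited result, and it follows the standard route. The block-diagonal structure of $\rho_{AZ}=\sum_z p_z\,\rho_A^{(z)}\ot\proj{z}$ does give spectrum $\{p_z\lambda_i^{(z)}\}_{z,i}$, the identity $\Tr\rho_{AZ}^q=\sum_z p_z^q\,\Tr\big(\rho_A^{(z)}\big)^q$ follows, and the algebraic rearrangement via $S_q(\rho)=\frac{1-\Tr\rho^q}{q-1}$ and $S_q(Z)=\frac{1-\sum_z p_z^q}{q-1}$ is exactly right. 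Two minor points worth making explicit: the restriction $q>0$ is what guarantees that any $z$ with $p_z=0$ (or any zero eigenvalue $\lambda_i^{(z)}$) contributes nothing to $\Tr\rho_{AZ}^q$, consistent with the paper's convention of omitting zero eigenvalues; and your direct treatment of $q=1$ is preferable to an appeal to continuity, since continuity of $q\mapsto S_q(AZ)$ at $q=1$ would itself need a word of justification, whereas the splitting $\ln(p_z\lambda_i^{(z)})=\ln p_z+\ln\lambda_i^{(z)}$ settles it immediately.
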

Note that the above Lemma immediately implies that
\begin{equation}\label{eq:cond}
  S_q(A|Z)_{\rho}=\sum_zp_z^qS_q(\rho_A^{(z)})\,.
\end{equation}

\begin{theorem}
\label{theorem: qcausal1}
Let $\rho_{ABC}=\sum_c p_c\rho_{AB}^{(c)}\otimes \proj{c}$, where $\rho_{AB}^{(c)}=\rho_A^{(c)}\otimes \rho_B^{(c)}$ $\forall c$, then, for all $q\geq1$,
$$I_q(A:B|C)_{\rho_{ABC}}\leq f(q,d_A,d_B)\,.$$

For $q>1$ the bound is saturated if and only if $\rho_{ABC}=\frac{\mathds{1}_A}{d_A}\otimes \frac{\mathds{1}_B}{d_B}\otimes \ket{c}\bra{c}_C$.
\end{theorem}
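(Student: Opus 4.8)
The plan is to mirror the classical proof of Theorem~\ref{theorem: causal}, replacing the elementary probabilistic rewriting of the conditional mutual information with the classical--quantum identity of Lemma~\ref{lemma: condtsal}, and then invoking the quantum unconditional bound of Theorem~\ref{theorem: qmibound} term by term. First I would decompose $I_q(A:B|C)_\rho$ using the fact that $\rho_{ABC}$ is classical on $C$: its marginals $\rho_{AC}$ and $\rho_{BC}$ are then also classical on $C$, and, treating $AB$ as a single quantum system, $\rho_{ABC}$ is classical on $C$ as well. Applying Equation~\eqref{eq:cond} three times (once with the quantum part $A$, once with $B$, and once with the composite $AB$) gives
\begin{equation*}
  S_q(A|C)_\rho = \sum_c p_c^q S_q(\rho_A^{(c)}), \qquad S_q(B|C)_\rho = \sum_c p_c^q S_q(\rho_B^{(c)}), \qquad S_q(AB|C)_\rho = \sum_c p_c^q S_q(\rho_{AB}^{(c)}).
\end{equation*}
Substituting these into the definition $I_q(A:B|C)_\rho = S_q(A|C)_\rho + S_q(B|C)_\rho - S_q(AB|C)_\rho$ yields
\begin{equation*}
  I_q(A:B|C)_\rho = \sum_c p_c^q \, I_q(A:B)_{\rho_{AB}^{(c)}}.
\end{equation*}

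Next, since each $\rho_{AB}^{(c)} = \rho_A^{(c)}\otimes\rho_B^{(c)}$ is a product state, Theorem~\ref{theorem: qmibound} bounds every term by $I_q(A:B)_{\rho_{AB}^{(c)}} \leq f(q,d_A,d_B)$. Because $f(q,d_A,d_B)\geq 0$ for all $q\geq 1$, this gives $I_q(A:B|C)_\rho \leq f(q,d_A,d_B)\sum_c p_c^q$. Finally, for $q\geq 1$ we have $p_c^q \leq p_c$, so $\sum_c p_c^q \leq \sum_c p_c = 1$, which completes the upper bound. This step is the exact quantum counterpart of the ``$\sum_z p_z^q$ is maximised by deterministic $Z$'' argument in the classical proof.

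For the saturation claim with $q>1$, I would argue that equality in the final bound forces both inequalities above to be tight. Since $f(q,d_A,d_B)>0$ for $q>1$, equality in $\sum_c p_c^q \leq 1$ requires $P_C$ to be deterministic, say $p_k=1$, as $\sum_c p_c^q$ attains its maximum value $1$ only on point distributions for $q>1$. With only the $c=k$ term surviving, equality in the per-$c$ bound then forces $I_q(A:B)_{\rho_{AB}^{(k)}} = f(q,d_A,d_B)$, which by the saturation part of Theorem~\ref{theorem: qmibound} holds if and only if $\rho_{AB}^{(k)} = \frac{\mathds{1}_A}{d_A}\otimes\frac{\mathds{1}_B}{d_B}$. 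Hence $\rho_{ABC} = \frac{\mathds{1}_A}{d_A}\otimes\frac{\mathds{1}_B}{d_B}\otimes\proj{k}_C$, as claimed.

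Overall the argument is essentially routine once Lemma~\ref{lemma: condtsal} is available, and it is cleaner than the classical version because that lemma supplies the conditional-entropy decomposition directly rather than requiring a rewriting from matrix elements. The only points demanding care are the application of Lemma~\ref{lemma: condtsal} to the \emph{composite} system $AB$ (verifying that $\rho_{ABC}$ being classical on $C$ legitimately lets us treat $AB$ as the single ``quantum register'' in the lemma), and checking $f(q,d_A,d_B)\geq 0$ on the range $q\geq 1$ so that multiplying through by $\sum_c p_c^q \leq 1$ indeed weakens rather than strengthens the bound. I do not anticipate a genuine obstacle beyond these bookkeeping checks.
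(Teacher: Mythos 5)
Your proposal is correct and follows essentially the same route as the paper's own proof: both use Lemma~\ref{lemma: condtsal} (via Equation~\eqref{eq:cond}) to obtain the decomposition $I_q(A:B|C)_\rho=\sum_c p_c^q\, I_q(A:B)_{\rho_{AB}^{(c)}}$, bound each term by Theorem~\ref{theorem: qmibound}, and conclude via $\sum_c p_c^q\leq 1$ for $q\geq 1$ with the same saturation analysis. Your explicit remarks that $f(q,d_A,d_B)\geq 0$ is needed when multiplying through and that the lemma is being applied to the composite register $AB$ are minor bookkeeping points the paper leaves implicit, not a different argument.
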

\begin{proof}
Using~\eqref{eq:cond} we have,
\begin{align*}
      I_q(A:B|C)_{\rho_{ABC}}
    &=S_q(A|C)_{\rho}+S_q(B|C)_{\rho}-S_q(AB|C)_{\rho}\\
    &=\sum_cp_c^q[S_q(\rho_A^{(c)})+S_q(\rho_B^{(c)})-S_q(\rho_{AB}^{(c)})]\\
    &=\sum_cp_c^qI_q(A:B)_{\rho_{AB}^{(c)}}\,.
\end{align*}
The rest of the proof is analogous to Theorem~\ref{theorem: causal},
where using the above, Theorem~\ref{theorem: qmibound} and defining
the set
$\mathcal{R}= \{\rho_{ABC} \in
\mathcal{H}_A\otimes\mathcal{H}_B\otimes\mathcal{H}_C:
\rho_{ABC}=\sum_cp_c\rho_A^{(c)}\otimes\rho_B^{(c)}\otimes\ket{c}\bra{c}\}$
we have,

\begin{align*}
\max\limits_{\mathcal{R}} I_q(A:B|C)_{\rho} &=\max\limits_{\mathcal{R}}\sum_c p_c^qI_q(A:B)_{\rho_{AB}^{(c)}}\\
&\leq \max\limits_{\{p_c\}_c}\sum_c p_c^q(c) \max\limits_{\{\rho_A^{(c)}\}_c,\{\rho_B^{(c)}\}_c}I_q(A:B)_{\rho_{AB}^{(c)}}\\
&=f(q,d_A,d_B)\,,
\end{align*}
where the last step follows because for all $q\geq1$, $\sum_cp_c^q$ is
maximized by deterministic distributions over $C$ with a maximum value
of $1$\footnote{For $q>1$ such deterministic distributions are the
  only way to obtain the bound.} and $I_q(A:B)_{\rho_{AB}^{(c)}}$ for
product states is maximised by the maximally mixed state over $A$ and
$B$ for all $c$ (Theorem~\ref{theorem: qmibound}). Thus, for $q>1$,
the bound is saturated if and only if
$\rho_{ABC}=\frac{\mathds{1}_A}{d_A}\otimes
\frac{\mathds{1}_B}{d_B}\otimes \ket{c}\bra{c}_C$ for some value $c$
of $C$.
\end{proof}

\subsection{A generalisation: when systems do not coexist}
\label{ssec: noncoexisting}
There is a fundamental problem with naively generalising classical conditional independences such as $p_{XY|Z}=p_{X|Z}p_{Y|Z}$ to the quantum case by replacing joint distributions by density matrices: it is not clear what is meant by a conditional quantum state e.g., $\rho_{A|C}$ since it is not clear what it means to condition on a quantum system, specially when the (joint state of the) system under consideration and the one being conditioned upon do not coexist. There are a number of approaches for tackling this problem, from describing quantum states in space and time on an equal footing \cite{Horsman2016} to quantum analogues of Bayesian inference \cite{Leifer2013} and causal modelling \cite{Costa2016, Allen2017, Pienaar2019}. In the following, we will focus on one such approach that is motivated by the framework of \cite{Allen2017}. Central to this approach is the Choi-Jamio\l{}kowski isomorphism \cite{Jamiolkowski1972, Choi1975} from which one can define conditional quantum states.

\begin{definition}[Choi state]
Let $\ket{\gamma}=\sum_i\ket{i}_R\ket{i}_{R^*}\in \mathcal{H}_R\otimes\mathcal{H}_{R^*}$, where $\mathcal{H}_{R^*}$ is the dual space to $\mathcal{H}_R$ and $\{\ket{i}_R\}_i$, $\{\ket{i}_{R^*}\}_i$ are orthonormal bases of $\mathcal{H}_R$ and $\mathcal{H}_{R^*}$ respectively. Given a channel $\mathcal{E}_{R|S}:\mathcal{S}(\mathcal{H}_R)\to\mathcal{S}(\mathcal{H}_S)$, the \emph{Choi state of
  the channel} is defined by
$$\rho_{S|R}=(\mathcal{E}_{R|S}\otimes\mathcal{I})(\ket{\gamma}\bra{\gamma})=\sum_{ij}\mathcal{E}(\ket{i}\bra{j}_R)\otimes\ket{i}\bra{j}_{R^*}\,.$$
Thus, $\rho_{S|R}\in\mathcal{P}(\mathcal{H}_S\otimes\mathcal{H}_{R^*})$.
\end{definition}

Now, if a quantum system $C$ evolves through a unitary channel
$\mathcal{E}_I(\cdot)=U'(\cdot)U'^{\dagger}$ to two systems $A'$ and
$B'$ where
$U':\mathcal{H}_C\rightarrow \mathcal{H}_{A'}\otimes\mathcal{H}_{B'}$,
it is reasonable to call the system $C$ a quantum common cause of the
systems $A'$ and $B'$. Further, this would still be reasonable if one
were to then perform local completely positive trace preserving (CPTP)
maps on the $A'$ and $B'$ systems. By the Stinespring dilation
theorem, these local CPTP maps can be seen as local isometries
followed by partial traces, and the local isometries can be seen as
the introduction of an ancilla in a pure state followed by a joint
unitary on the system and ancilla. This is illustrated in
Figure~\ref{fig: qchannel} and is compatible with the definition of
quantum common causes presented in \cite{Allen2017}.  In other words,
a system $C$ can be said to be a complete (quantum) common cause of
systems $A$ and $B$ if the corresponding
channel
$\mathcal{E}:\mathcal{S}(\mathcal{H}_C)\rightarrow
\mathcal{S}(\mathcal{H}_A\otimes\mathcal{H}_B)$ can be decomposed as
in Figure~\ref{fig: qchannel} for some choice of unitaries $U'$,
$U_A$, $U_B$ and pure states $\ket{\phi}_{E_A}$,
$\ket{\psi}_{E_B}$. Note that a more general set of channels fit the
definition of quantum common cause in Ref.~\cite{Allen2017} than we
use here; whether the theorems here extend to this case we leave as an open
question.

\begin{figure}[t]
    \centering
\includegraphics[scale=1.0]{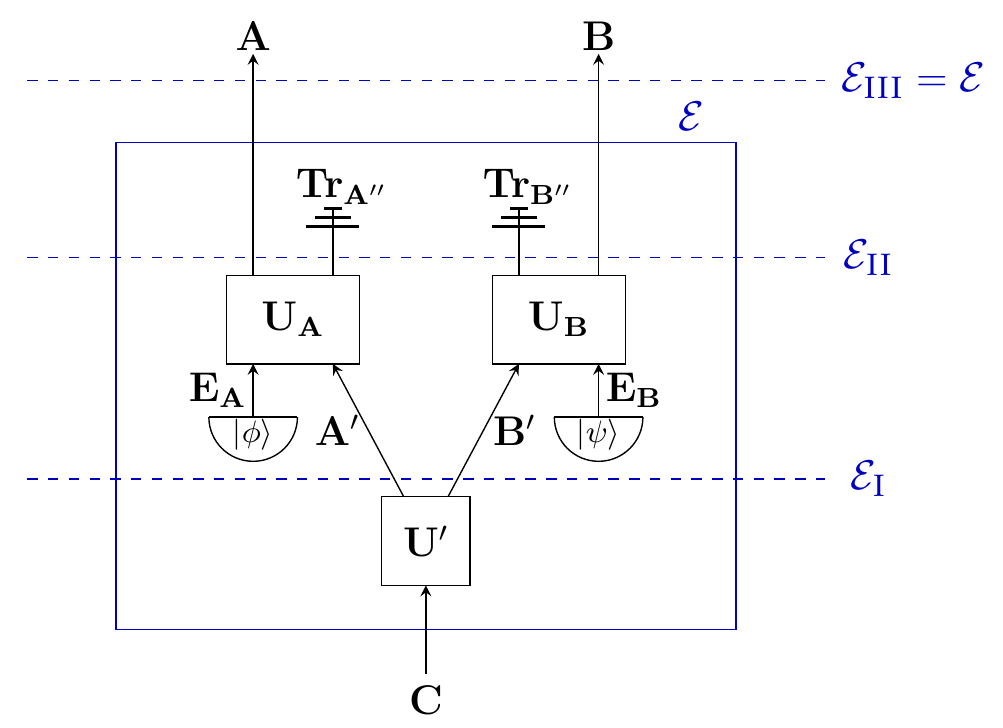}
    \caption{\textbf{A circuit decomposition of the channel $\mathbf{\mathcal{E}:\mathcal{S}(\mathcal{H}_C)\rightarrow \mathcal{S}(\mathcal{H}_A\otimes\mathcal{H}_B)}$ when $\mathbf{C}$ is a \emph{complete common cause} of $\mathbf{A}$ and $\mathbf{B}$:} If the map $\mathcal{E}$ from the system $C$ to the systems $A$ and $B$ can be decomposed as shown here, then $C$ is a complete common cause of $A$ and $B$ (\cite{Allen2017}). We build up our result step by step considering the channels given by $\mathcal{E}_{\textsc{i}}$ (unitary), $\mathcal{E}_{\textsc{ii}}$ (unitary followed by local isometries) and $\mathcal{E}_{\textsc{iii}}=\mathcal{E}$.}
    \label{fig: qchannel}
\end{figure}

In \cite{Allen2017} it is shown that whenever a system $C$ is a
complete common cause of systems $A$ and $B$ then the Shannon conditional mutual information evaluated
on the state $\tau_{ABC^*}=\frac{1}{d_A}\rho_{AB|C}$ satisfies
$I(A:B|C^*)_\tau=0$ where $\rho_{AB|C}$ is the Choi state of the channel
from $C$ to $A$ and $B$. We generalise this result to Tsallis
entropies for $q\geq 1$ for certain types of channels. We present the
result in three cases, each with increasing levels of generality. These are explained in Figure~\ref{fig: qchannel} and
correspond to the cases where the map from the complete common cause
$C$ to its children $A$ and $B$ is ({\sc i}) unitary ($\mathcal{E}_{\textsc{i}}=U'$); ({\sc ii}) unitary followed by local isometries ($\mathcal{E}_{\textsc{ii}}$); ({\sc iii}) Unitary followed by local isometries followed by partial traces on local systems ($\mathcal{E}_{\textsc{iii}}=\mathcal{E}$). 

\begin{lemma}\label{lemma: qstageI}
Let $\mathcal{E}_{\textsc{i}}: \mathcal{S}(\mathcal{H}_C)\rightarrow \mathcal{S}(\mathcal{H}_{A'}\otimes\mathcal{H}_{B'})$ be a unitary quantum channel i.e., $$\mathcal{E}_{\textsc{i}}(\cdot)= U'(\cdot)U'^{\dagger},$$
where $U': \mathcal{H}_C\rightarrow \mathcal{H}_{A'}\otimes\mathcal{H}_{B'}$ is an arbitrary unitary operator. If $\rho_{A'B'|C}$ is the corresponding Choi state, then the Tsallis conditional mutual information evaluated on the state $\tau_{A'B'C^*}=\frac{1}{d_C}\rho_{A'B'|C}\in \mathcal{S}(\mathcal{H}_{A'}\otimes\mathcal{H}_{B'}\otimes\mathcal{H}_{C^*})$ satisfies
  $$I_q(A':B'|C^*)_\tau= f(q,d_{A'},d_{B'}) \qquad \forall q>0.$$
\end{lemma}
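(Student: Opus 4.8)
The plan is to exploit the special structure of the Choi state of a unitary channel. First I would substitute the definition of the Choi state into $\tau_{A'B'C^*}=\frac{1}{d_C}\rho_{A'B'|C}$. Writing $\ket{\gamma}=\sum_i\ket{i}_C\ket{i}_{C^*}$ and applying $\mathcal{E}_{\textsc{i}}\ot\mathcal{I}$, one finds
\[\rho_{A'B'|C}=\sum_{ij}U'\ketbra{i}{j}_C U'^\dagger\ot\ketbra{i}{j}_{C^*}=\ketbra{\Gamma}{\Gamma},\]
where $\ket{\Gamma}=(U'\ot\mathcal{I})\ket{\gamma}=\sum_i U'\ket{i}_C\ot\ket{i}_{C^*}$. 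Hence the Choi state of a unitary is rank one, and since $\braket{\Gamma}{\Gamma}=d_C$ by unitarity of $U'$, the normalised state $\tau_{A'B'C^*}=\frac{1}{d_C}\ketbra{\Gamma}{\Gamma}$ is a valid pure density operator. The crucial consequence is that $S_q(A'B'C^*)_\tau=0$.

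Next I would compute the required marginals. Tracing out $C^*$ collapses the double sum to its diagonal and, using $U'U'^\dagger=\mathds{1}_{A'B'}$, gives $\tau_{A'B'}=\frac{1}{d_C}\mathds{1}_{A'B'}=\frac{\mathds{1}_{A'}}{d_{A'}}\ot\frac{\mathds{1}_{B'}}{d_{B'}}$, a product of maximally mixed states; similarly, tracing out $A'B'$ and using $U'^\dagger U'=\mathds{1}_C$ gives $\tau_{C^*}=\frac{\mathds{1}_{C^*}}{d_C}$, also maximally mixed. From the product form of $\tau_{A'B'}$ I read off $\tau_{A'}=\mathds{1}_{A'}/d_{A'}$ and $\tau_{B'}=\mathds{1}_{B'}/d_{B'}$, so by the upper bound (Property~\ref{prop:Qupper}) $S_q(A')=\ln_q d_{A'}$, $S_q(B')=\ln_q d_{B'}$, and $S_q(C^*)=\ln_q d_C=\ln_q(d_{A'}d_{B'})$.

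The final step uses purity to evaluate the two-party marginals without diagonalising them. Since $\tau$ is pure on $A'B'C^*$, the reduced states on complementary subsystems share the same nonzero spectrum, and as $S_q$ depends only on eigenvalues we get $S_q(A'C^*)=S_q(B')=\ln_q d_{B'}$ and $S_q(B'C^*)=S_q(A')=\ln_q d_{A'}$. Assembling $I_q(A':B'|C^*)_\tau=S_q(A'C^*)+S_q(B'C^*)-S_q(C^*)-S_q(A'B'C^*)$ yields $\ln_q d_{A'}+\ln_q d_{B'}-\ln_q(d_{A'}d_{B'})$, and applying the pseudo-additivity of the $q$-logarithm (Property~\ref{prop:Qpseudo}), namely $\ln_q(d_{A'}d_{B'})=\ln_q d_{A'}+\ln_q d_{B'}+(1-q)\ln_q d_{A'}\ln_q d_{B'}$, collapses this to $(q-1)\ln_q d_{A'}\ln_q d_{B'}=f(q,d_{A'},d_{B'})$, for all $q>0$, as claimed.

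The main obstacle is conceptual rather than computational: recognising that the Choi state of a unitary channel is a pure maximally-entangled-type state, and then leveraging the equality of Tsallis entropies on complementary subsystems of a pure state. Once these two facts are in hand, every marginal is maximally mixed and the result drops out of pseudo-additivity. A secondary point to handle with care is the bookkeeping of which dimension appears where — in particular that $S_q(A'C^*)$ equals $\ln_q d_{B'}$ and not $\ln_q d_{A'}$ — which follows from correctly pairing each two-party marginal with its one-party complement in the pure state.
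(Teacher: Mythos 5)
Your proposal is correct and follows essentially the same route as the paper's proof: both hinge on recognising that the normalised Choi state of a unitary channel is pure, that its single-system marginals and $\tau_{C^*}$ are maximally mixed, and that purity forces the two-party marginals $\tau_{A'C^*}$, $\tau_{B'C^*}$ to share the nonzero spectrum of $\tau_{B'}$, $\tau_{A'}$ respectively. The only difference is cosmetic — you assemble the answer from $\ln_q$ values and the pseudo-additivity identity, whereas the paper works directly with the traces $\Tr\tau_S^q$ — and the bookkeeping in both is equivalent.
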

\begin{proof}
 The conditional mutual information $I_q(A':B'|C^*)_\tau$ can be written as
 \begin{equation}
 \label{eq: qCMI-I}
     I_q(A':B'|C^*)_\tau=\frac{1}{q-1}\big(\Tr_{A'B'C^*}\tau_{A'B'C^*}^q+\Tr_{C^*}\tau_{C^*}^q-\Tr_{A'C^*}\tau_{A'C^*}^q-\Tr_{B'C^*}\tau_{B'C^*}^q\big).
 \end{equation}
 We will now evaluate every term in the above expression for the case where the channel that maps the $C$ system to the $A'$ and $B'$ systems is unitary. In this case, $\tau_{A'B'C^*}$ is a pure state and can be written as $\tau_{A'B'C^*}=\ket{\tau}\bra{\tau}_{A'B'C^*}$ where
 \begin{equation}
 \label{eq: vectau-I}
     \ket{\tau}_{A'B'C^*}=\frac{1}{\sqrt{d_C}}\sum_i U'\ket{i}_C\otimes \ket{i}_{C^*}.
 \end{equation}
 This means that
 $\Tr_{A'B'C^*}\tau_{A'B'C^*}^q=\Tr_{A'B'C^*}\tau_{A'B'C^*}$ $\forall
 q>0$. Since $\tau_{A'B'C^*}$ is a valid quantum state, it must be a trace one operator and we have
 \begin{equation}
  \label{eq: qterm1-I}
     \Tr_{A'B'C^*}\tau_{A'B'C^*}^q=1 \quad \forall q>0
 \end{equation}
 Further, we have $\tau_{C^*}=\Tr_{A'B'}\tau_{A'B'C^*}=\frac{\mathds{1}_{C^*}}{d_C}$ and hence
 \begin{equation}
  \label{eq: qterm2-I}
     \Tr_{C^*}\tau_{C^*}^q=\frac{1}{d_C^{q-1}}=\frac{1}{d_{A'}^{q-1}d_{B'}^{q-1}}.
 \end{equation}
 The second step follows from the fact that $U': \mathcal{H}_C\rightarrow \mathcal{H}_{A'}\otimes\mathcal{H}_{B'}$ is unitary so $d_C=d_{A'}d_{B'}$. 
 
 Now, the marginals over $A'$ and $B'$ are
 $\tau_{A'}=\Tr_{B'C^*}\tau_{A'B'C^*}=\frac{\mathds{1}_{A'}}{d_{A'}}$
 and
 $\tau_{B'}=\Tr_{A'C^*}\tau_{A'B'C^*}=\frac{\mathds{1}_{B'}}{d_{B'}}$. By the Schmidt decomposition of $\tau_{A'B'C^*}$, the non-zero
 eigenvalues of $\tau_{A'}$ are the same as those of  $\tau_{B'C^*}$.
 Since the Tsallis entropy depends only on the non-zero eigenvalues,
 $S_q(A')=S_q(B'C^*)$ and hence
 \begin{equation}
  \label{eq: qterm3-I}
     \Tr_{B'C^*}\tau_{B'C^*}^q=d_{A'}\Bigg(\frac{1}{d_{A'}^q}\Bigg)=\frac{1}{d_{A'}^{q-1}}\,.
 \end{equation}
 By the same argument it follows that
  \begin{equation}
   \label{eq: qterm4-I}
     \Tr_{A'C^*}\tau_{A'C^*}^q=d_{B'}\Bigg(\frac{1}{d_{B'}^q}\Bigg)=\frac{1}{d_{B'}^{q-1}}\,.
 \end{equation}
 Combining Equations~\eqref{eq: qCMI-I}-\eqref{eq: qterm4-I}, we have
 \begin{align}
     I_q(A':B'|C^*)_\tau=\frac{1}{q-1}\Bigg(1+\frac{1}{d_{A'}^{q-1}d_{B'}^{q-1}}-\frac{1}{d_{A'}^{q-1}}-\frac{1}{d_{B'}^{q-1}}\Bigg)
     =f(q,d_{A'},d_{B'}) \quad \forall q>0\,.
 \end{align}
\end{proof}

\begin{lemma}\label{lemma: qstageII}
Let $\mathcal{E}_{\textsc{ii}}: \mathcal{S}(\mathcal{H}_C)\rightarrow \mathcal{S}(\mathcal{H}_{\tilde{A}}\otimes\mathcal{H}_{\tilde{B}})$ be a quantum channel of the form $$\mathcal{E}_{\textsc{ii}}(\cdot)=(U_A\otimes U_B)\big[\ket{\phi}\bra{\phi}_{E_A}\otimes U'(\cdot)U'^{\dagger}\otimes \ket{\psi}\bra{\psi}_{E_B}\big](U_A\otimes U_B)^{\dagger},$$
where $U': \mathcal{H}_C\rightarrow \mathcal{H}_{A'}\otimes\mathcal{H}_{B'}$, $U_A:\mathcal{H}_{E_A}\otimes\mathcal{H}_{A'}\rightarrow \mathcal{H}_{\tilde{A}}$ and $U_B:\mathcal{H}_{B'}\otimes\mathcal{H}_{E_B}\rightarrow \mathcal{H}_{\tilde{B}}$ are arbitrary unitaries and $\ket{\phi}_{E_A}$ and $\ket{\psi}_{E_B}$ are arbitrary pure states. If $\rho_{\tilde{A}\tilde{B}|C}$ is the corresponding Choi state, then the Tsallis conditional mutual information evaluated on the state $\tau_{\tilde{A}\tilde{B}C^*}=\frac{1}{d_C}\rho_{\tilde{A}\tilde{B}|C}\in \mathcal{S}(\mathcal{H}_{\tilde{A}}\otimes\mathcal{H}_{\tilde{B}}\otimes\mathcal{H}_{C^*})$ satisfies
  $$I_q(\tilde{A}:\tilde{B}|C^*)_\tau= f(q,d_{A'},d_{B'}) \qquad \forall q>0.$$
\end{lemma}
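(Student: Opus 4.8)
The plan is to reduce the entire computation to Lemma~\ref{lemma: qstageI} by exploiting the fact that adjoining pure ancillas and applying local unitaries leaves unchanged every one of the four Tsallis entropies that appear in the conditional mutual information. First I would identify the relevant global state. Since $\mathcal{E}_{\textsc{ii}}$ differs from the unitary channel $\mathcal{E}_{\textsc{i}}$ only by tensoring the pure states $\ket{\phi}_{E_A}$, $\ket{\psi}_{E_B}$ and then applying the local unitaries $U_A$ on $\mathcal{H}_{E_A}\otimes\mathcal{H}_{A'}$ and $U_B$ on $\mathcal{H}_{B'}\otimes\mathcal{H}_{E_B}$, the Choi state is obtained from the stage-I pure state $\ket{\tau}_{A'B'C^*}$ of Equation~\eqref{eq: vectau-I} via
\[
\ket{\tilde\tau}_{\tilde A\tilde B C^*}=(U_A\otimes U_B)\bigl(\ket{\phi}_{E_A}\otimes\ket{\tau}_{A'B'C^*}\otimes\ket{\psi}_{E_B}\bigr).
\]
In particular $\tau_{\tilde A\tilde B C^*}$ is again pure, so $\Tr\tau_{\tilde A\tilde B C^*}^q=1$ for all $q>0$, reproducing the first term of Equation~\eqref{eq: qCMI-I} exactly as in the stage-I case.

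Next I would evaluate the remaining three terms. For the $C^*$-marginal, because $U_A\otimes U_B$ acts only on $\tilde A\tilde B$ and the partial trace is invariant under unitaries, $\tau_{C^*}$ coincides with the stage-I marginal $\mathds{1}_{C^*}/d_C$, giving $\Tr\tau_{C^*}^q=1/(d_{A'}^{q-1}d_{B'}^{q-1})$ on using $d_C=d_{A'}d_{B'}$. For the two mixed marginals I would reuse the Schmidt/complementarity argument of Lemma~\ref{lemma: qstageI}: purity of $\tau_{\tilde A\tilde B C^*}$ gives $S_q(\tilde A C^*)=S_q(\tilde B)$ and $S_q(\tilde B C^*)=S_q(\tilde A)$. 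To compute $S_q(\tilde A)$ I would note that $U_A$ is a local unitary (hence entropy-preserving) and that the reduced state on $\mathcal{H}_{E_A}\otimes\mathcal{H}_{A'}$ factorises as $\ket{\phi}\bra{\phi}_{E_A}\otimes\tau_{A'}$ with $\tau_{A'}=\mathds{1}_{A'}/d_{A'}$; pseudo-additivity (Property~\ref{prop:Qpseudo}) with a pure factor then yields $S_q(\tilde A)=S_q(A')=\ln_q d_{A'}$, and symmetrically $S_q(\tilde B)=\ln_q d_{B'}$. Converting back via $\Tr\rho^q=1-(q-1)S_q(\rho)$ gives $\Tr\tau_{\tilde A C^*}^q=1/d_{B'}^{q-1}$ and $\Tr\tau_{\tilde B C^*}^q=1/d_{A'}^{q-1}$.

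Finally, substituting these four values into Equation~\eqref{eq: qCMI-I} collapses the expression to the identical combination evaluated in Lemma~\ref{lemma: qstageI}, which factorises as $\tfrac{1}{q-1}(1-d_{A'}^{1-q})(1-d_{B'}^{1-q})=f(q,d_{A'},d_{B'})$. I do not expect any hard inequality or optimisation to arise: the entire argument is a chain of spectrum-preserving manipulations. The only genuine subtlety, and hence the step I would be most careful with, is the bookkeeping needed to confirm that each of the four spectra is truly left invariant by the isometric dilation — that is, verifying that the local unitaries $U_A,U_B$ and the pure ancillas $\ket{\phi}_{E_A},\ket{\psi}_{E_B}$ affect neither the $C^*$-marginal nor (via purity and pseudo-additivity) the single-party entropies $S_q(\tilde A),S_q(\tilde B)$, so that the value is pinned down to the stage-I result.
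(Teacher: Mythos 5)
Your proposal is correct and takes essentially the same route as the paper: both reduce Lemma~\ref{lemma: qstageII} to Lemma~\ref{lemma: qstageI} by observing that $\mathcal{E}_{\textsc{ii}}$ is the stage-{\sc i} unitary followed by local isometries (pure ancillas plus local unitaries), which leave the spectra of all the relevant reduced states unchanged. The only difference is that the paper states this invariance in one line, whereas you verify it explicitly for each of the four terms via purity, the Schmidt argument and pseudo-additivity --- a more detailed but equivalent bookkeeping.
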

\begin{proof}
  Note that the map $\mathcal{E}_{\textsc{ii}}$ is the unitary map
  $\mathcal{E}_{\textsc{i}}(\cdot)=U'(\cdot)U'^{\dagger}$ followed by
  local isometries $V_A$ and $V_B$ on the $A'$ and $B'$ systems
  respectively. Since the expression for the conditional mutual
  information $I_q(\tilde{A}:\tilde{B}|C^*)_\tau$ can be written in
  terms of entropies, which are functions of the eigenvalues of the
  relevant reduced density operators, and since the eigenvalues are
  unchanged by local isometries, this conditional mutual information
  is invariant under local isometries.  The rest of the proof is
  identical to that of Lemma~\ref{lemma: qstageI} resulting in
  \begin{equation}
      I_q(\tilde{A}:\tilde{B}|C^*)_\tau=  I_q(A':B'|C^*)_\tau=f(q,d_{A'},d_{B'}) \qquad \forall q>0.
  \end{equation}
\end{proof}

For the last case where $\mathcal{E}_{\textsc{iii}}(\cdot)=\Tr_{A''B''}\Big[(U_A\otimes U_B)\big[\ket{\phi}\bra{\phi}_{E_A}\otimes U'(\cdot)U'^{\dagger}\otimes \ket{\psi}\bra{\psi}_{E_B}\big](U_A\otimes U_B)^{\dagger}\Big]$, one could intuitively argue that tracing out systems could not increase the mutual information and one would expect that
\begin{equation}
\label{eq: wish}
   I_q(AA'':BB''|C^*)_\tau\geq I_q(A:B|C^*)_\tau.
\end{equation}  
Since
$I_q(AA'':BB''|C^*)_\tau=I_q(A:B|C^*)_\tau+I_q(AA'':B''|BC^*)_\tau+I_q(A'':B|AC^*)_\tau$,
Equation~\eqref{eq: wish} would follow from strong subadditivity used
twice i.e., $I_q(AA'':B''|BC^*)_\tau\geq 0$ and
$I_q(A'':B|AC^*)_\tau\geq 0$. However, it is known that strong
subadditivity does not hold in general for Tsallis entropies for
$q>1$~\cite{Petz2014}. Ref.~\cite{Petz2014} also provides a
sufficiency condition for strong subadditivity to hold for Tsallis
entropies. In the following Lemma, we provide another, simple
sufficiency condition that also helps bound the Tsallis mutual information $I_q(AA'':B|C)_{\tau}$ (or $I_q(A:BB''|C)_{\tau}$) corresponding to the map $\mathcal{E}_{\textsc{iii}}$ where only one of $A''$ or $B''$ is traced out but not both.

\begin{lemma}[Sufficiency condition for strong subadditivity of Tsallis entropies]
\label{lemma: sufficient}
If $\rho_{ABC}$ is a pure quantum state, then for all $q\geq 1$ we
have $I_q(A:B|C)_\rho\geq 0$.
\end{lemma}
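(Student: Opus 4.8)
The plan is to exploit two special features of a pure tripartite state: it carries vanishing total Tsallis entropy, and the reduced states on complementary subsystems share the same nonzero spectrum. Expanding the conditional mutual information through the quantum conditional entropies $S_q(X|Y)_\rho=S_q(XY)-S_q(Y)$, I would first write
\begin{equation*}
I_q(A:B|C)_\rho = S_q(AC) + S_q(BC) - S_q(C) - S_q(ABC)\,.
\end{equation*}
The strategy is then to rewrite every term on the right-hand side so that the whole expression collapses to an ordinary (unconditional) subadditivity statement, which is already known to hold for $q\geq1$.

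First I would note that, since $\rho_{ABC}$ is pure, its only nonzero eigenvalue equals $1$, and because $\ln_q 1=0$ this gives $S_q(ABC)=0$ for every $q$. Next, since $S_q$ depends only on the eigenvalues of the underlying density operator, and since for a pure state the reduced operators on a subsystem and on its complement have identical nonzero spectra (Schmidt decomposition), I would invoke the spectral identifications obtained by cutting $\rho_{ABC}$ across the three bipartitions $AC\,|\,B$, $BC\,|\,A$ and $C\,|\,AB$, namely $S_q(AC)=S_q(B)$, $S_q(BC)=S_q(A)$ and $S_q(C)=S_q(AB)$. Substituting these into the expression above yields
\begin{equation*}
I_q(A:B|C)_\rho = S_q(A) + S_q(B) - S_q(AB) = I_q(A:B)_\rho\,.
\end{equation*}

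The final step is to observe that the right-hand side is precisely the quantity controlled by subadditivity of the quantum Tsallis entropy, Equation~\eqref{eq:Qsubadd}, which states $S_q(AB)\leq S_q(A)+S_q(B)$ for all $q\geq1$; this gives $I_q(A:B|C)_\rho\geq0$ and completes the argument. I do not expect a genuine obstacle here, as the claim reduces entirely to established subadditivity. The only point requiring care is the Schmidt-decomposition step, where one must verify that the complementary reduced states indeed share nonzero spectra and that $S_q$ is a spectral function; and it is worth emphasising that the hypothesis $q\geq1$ enters exactly through subadditivity, whose failure for $q<1$ is the reason the statement is restricted to $q\geq1$.
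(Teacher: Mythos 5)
Your proposal is correct and follows essentially the same route as the paper: write $I_q(A:B|C)$ in terms of unconditional entropies, use purity to set $S_q(ABC)=0$ and the Schmidt decomposition to identify $S_q(AC)=S_q(B)$, $S_q(BC)=S_q(A)$, $S_q(C)=S_q(AB)$, reducing the claim to subadditivity for $q\geq 1$. No gaps.
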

\begin{proof}
  We have
  \begin{align*}
          I_q(A:B|C)=S_q(AC)+S_q(BC)-S_q(ABC)-S_q(C).
  \end{align*}
  Since $\rho_{ABC}$ is pure we have $S_q(ABC)=0$ $\forall q> 0$
  and (from the Schmidt decomposition argument mentioned earlier)
  $S_q(AC)=S_q(B)$, $S_q(BC)=S_q(A)$ and $S_q(C)=S_q(AB)$. Thus,
   \begin{align*}
          I_q(A:B|C)=S_q(A)+S_q(B)-S_q(AB)=I_q(A:B)\geq 0,
  \end{align*}
  which follows from subadditivity of quantum Tsallis entropies for $q\geq 1$ \cite{Audenaert2007}. In other words, for pure $\rho_{ABC}$, strong subadditivity of Tsallis entropies is equivalent to their subadditivity which holds whenever $q\geq 1$.
\end{proof}

\begin{corollary}
\label{corollary: qstageIII}
Let $\mathcal{E}^1_{\textsc{iii}}: \mathcal{S}(\mathcal{H}_C)\rightarrow \mathcal{S}(\mathcal{H}_{\tilde{A}}\otimes\mathcal{H}_{B})$ be a quantum channel of the form $$\mathcal{E}^1_{\textsc{iii}}(\cdot)=\Tr_{B''}\Big[(U_A\otimes U_B)\big[\ket{\phi}\bra{\phi}_{E_A}\otimes U'(\cdot)U'^{\dagger}\otimes \ket{\psi}\bra{\psi}_{E_B}\big](U_A\otimes U_B)^{\dagger}\Big],$$
where $U': \mathcal{H}_C\rightarrow \mathcal{H}_{A'}\otimes\mathcal{H}_{B'}$, $U_A:\mathcal{H}_{E_A}\otimes\mathcal{H}_{A'}\rightarrow\mathcal{H}_{\tilde{A}}\cong\mathcal{H}_{A}\otimes\mathcal{H}_{A''}$ and $U_B:\mathcal{H}_{B'}\otimes\mathcal{H}_{E_B}\rightarrow\mathcal{H}_{\tilde{B}}\cong\mathcal{H}_{B}\otimes\mathcal{H}_{B''}$ are arbitrary unitaries and $\ket{\phi}_{E_A}$ and $\ket{\psi}_{E_B}$ are arbitrary pure states. If $\rho_{\tilde{A}B|C}$ is the corresponding Choi state, then the Tsallis conditional mutual information evaluated on the state $\tau_{\tilde{A}BC^*}=\frac{1}{d_C}\rho_{\tilde{A}B|C}\in \mathcal{S}(\mathcal{H}_{\tilde{A}}\otimes\mathcal{H}_{B}\otimes\mathcal{H}_{C^*})$ satisfies
  $$I_q(\tilde{A}:B|C^*):=I_q(AA'':B|C^*)_\tau\leq f(q,d_{A'},d_{B'}) \qquad \forall q\geq 1.$$
\end{corollary}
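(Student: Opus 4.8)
The plan is to reduce to the isometric case of Lemma~\ref{lemma: qstageII} and to pay for the partial trace over $B''$ with a single application of the purity-based sufficiency condition of Lemma~\ref{lemma: sufficient}. The starting observation is that the channel $\mathcal{E}^1_{\textsc{iii}}$ is exactly the isometric channel $\mathcal{E}_{\textsc{ii}}$ of Lemma~\ref{lemma: qstageII} followed by $\Tr_{B''}$, where $\tilde{B}\cong B\otimes B''$. Since the Choi construction is linear and $\Tr_{B''}$ commutes with the trivial action on $C^*$, the Choi state of $\mathcal{E}^1_{\textsc{iii}}$ is $\rho_{\tilde{A}B|C}=\Tr_{B''}\rho_{\tilde{A}\tilde{B}|C}$, so the state $\tau_{\tilde{A}BC^*}$ of the corollary is precisely the marginal $\Tr_{B''}\tau_{\tilde{A}\tilde{B}C^*}$ of the state produced in Lemma~\ref{lemma: qstageII}. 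I would first record this identification and note, as already used in Lemmas~\ref{lemma: qstageI} and~\ref{lemma: qstageII}, that $\tau_{\tilde{A}\tilde{B}C^*}$ is pure, being the Choi state of an isometry (the unitary $U'$ together with local isometries, which preserve purity).

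Next I would expand the full conditional mutual information by the chain rule. Writing each $I_q$ in terms of Tsallis entropies via Equation~\eqref{eq: tsalmi2} and cancelling terms, one obtains the identity
\begin{equation*}
  I_q(\tilde{A}:BB''|C^*)_\tau = I_q(\tilde{A}:B|C^*)_\tau + I_q(\tilde{A}:B''|BC^*)_\tau,
\end{equation*}
valid for all $q$ since it is purely linear in the entropy components. By Lemma~\ref{lemma: qstageII} the left-hand side equals $f(q,d_{A'},d_{B'})$, and hence
\begin{equation*}
  I_q(\tilde{A}:B|C^*)_\tau = f(q,d_{A'},d_{B'}) - I_q(\tilde{A}:B''|BC^*)_\tau.
\end{equation*}
It therefore suffices to show that $I_q(\tilde{A}:B''|BC^*)_\tau\geq 0$ for all $q\geq 1$.

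This is exactly where the restriction to $q\geq1$ and the failure of general strong subadditivity enter. One cannot invoke strong subadditivity directly, since it fails for Tsallis entropies when $q>1$. Instead I would apply Lemma~\ref{lemma: sufficient} to the tripartition $(\tilde{A},\,B'',\,BC^*)$: because $\tau_{\tilde{A}B''BC^*}=\tau_{\tilde{A}\tilde{B}C^*}$ is globally pure, the lemma yields precisely $I_q(\tilde{A}:B''|BC^*)_\tau\geq 0$ for all $q\geq1$. Substituting this into the displayed identity gives $I_q(\tilde{A}:B|C^*)_\tau\leq f(q,d_{A'},d_{B'})$, which is the claim.

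The main obstacle is conceptual rather than computational: one must resist proving the bound by a direct strong-subadditivity step, which is false for $q>1$, and instead recognise that the only conditional mutual information whose sign must be controlled is evaluated on a globally pure state, so that Lemma~\ref{lemma: sufficient} applies in place of general strong subadditivity. The remaining work is purely bookkeeping, namely verifying that tracing out $B''$ produces the corollary's state $\tau_{\tilde{A}BC^*}$ and that $I_q(\tilde{A}:B|C^*)_\tau$ depends only on this marginal, so that its value is the same whether computed from the reduced state or from the global pure state.
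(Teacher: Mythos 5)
Your proposal is correct and follows essentially the same route as the paper: the chain-rule decomposition $I_q(\tilde{A}:\tilde{B}|C^*)=I_q(\tilde{A}:B|C^*)+I_q(\tilde{A}:B''|BC^*)$, positivity of the second term via Lemma~\ref{lemma: sufficient} applied to the globally pure $\tau_{\tilde{A}\tilde{B}C^*}$, and the value $f(q,d_{A'},d_{B'})$ from Lemma~\ref{lemma: qstageII}. The only cosmetic difference is that you phrase the conclusion as an exact identity minus a nonnegative term, whereas the paper states it as the inequality $I_q(\tilde{A}:\tilde{B}|C^*)_\tau\geq I_q(\tilde{A}:B|C^*)_\tau$.
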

\begin{proof}
  Since $I_q(AA'':BB''|C^*)_{\tau}=I_q(AA'':B|C^*)_{\tau}+I_q(AA'':B''|BC^*)_{\tau}$, the purity of $\tau_{\tilde{A}\tilde{B}C^*}=\tau_{AA''BB''C^*}$ and Lemma~\ref{lemma: sufficient} imply that $$I_q(AA'':BB''|C^*)_{\tau}\geq I_q(AA'':B|C^*)_{\tau}, \forall q\geq 1,$$ or (equivalently) in more concise notation,
  $$I_q(\tilde{A}:\tilde{B}|C^*)_{\tau}\geq I_q(\tilde{A}:B|C^*)_{\tau} \quad \forall q\geq 1.$$ Finally, using Lemma~\ref{lemma: qstageII} we obtain the required result.
\end{proof}

Now, for Equation~\eqref{eq: wish} to hold, we do not necessarily need strong subadditivity. Even if $I_q(A'':B|AC)_{\tau}\geq 0$ does not hold, Equation~\eqref{eq: wish} would still hold if $I_q(AA'':B''|BC)_{\tau}+I_q(A'':B|AC)_{\tau}\geq 0$. This motivates the following conjecture.

\begin{conjecture}\label{conjecture: qstageIII}
Let $\mathcal{E}_{\textsc{iii}}: \mathcal{S}(\mathcal{H}_C)\rightarrow \mathcal{S}(\mathcal{H}_{A}\otimes\mathcal{H}_{B})$ be a quantum channel of the form $$\mathcal{E}_{\textsc{iii}}(\cdot)=\Tr_{A''B''}\Big[(U_A\otimes U_B)\big[\ket{\phi}\bra{\phi}_{E_A}\otimes U'(\cdot)U'^{\dagger}\otimes \ket{\psi}\bra{\psi}_{E_B}\big](U_A\otimes U_B)^{\dagger}\Big],$$
where $U': \mathcal{H}_C\rightarrow \mathcal{H}_{A'}\otimes\mathcal{H}_{B'}$, $U_A:\mathcal{H}_{E_A}\otimes\mathcal{H}_{A'}\rightarrow \mathcal{H}_{A}\otimes\mathcal{H}_{A''}$ and $U_B:\mathcal{H}_{B'}\otimes\mathcal{H}_{E_B}\rightarrow \mathcal{H}_{B}\otimes\mathcal{H}_{B''}$ are arbitrary unitaries and $\ket{\phi}_{E_A}$ and $\ket{\psi}_{E_B}$ are arbitrary pure states. If $\rho_{AB|C}$ is the corresponding Choi state, then the Tsallis conditional mutual information evaluated on the state $\tau_{ABC^*}=\frac{1}{d_C}\rho_{AB|C}\in \mathcal{S}(\mathcal{H}_{A}\otimes\mathcal{H}_{B}\otimes\mathcal{H}_{C^*})$ satisfies
  $$I_q(A:B|C^*)_\tau\leq f(q,d_{A'},d_{B'}) \qquad \forall q\geq 1.$$
\end{conjecture}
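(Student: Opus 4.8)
The plan is to start from the chain-rule decomposition recorded just above Equation~\eqref{eq: wish},
$$I_q(AA'':BB''|C^*)_\tau=I_q(A:B|C^*)_\tau+I_q(AA'':B''|BC^*)_\tau+I_q(A'':B|AC^*)_\tau,$$
and to use Lemma~\ref{lemma: qstageII}, which pins the left-hand side to exactly $f(q,d_{A'},d_{B'})$. The target bound is then equivalent to $I_q(AA'':B''|BC^*)_\tau+I_q(A'':B|AC^*)_\tau\geq 0$. The first term is non-negative: the pre-trace state $\tau_{AA''BB''C^*}$ is a pure state (the normalised Choi state of the isometric channel $\mathcal{E}_{\textsc{ii}}$), so grouping the systems as $(AA'')$, $(B'')$, $(BC^*)$ and applying Lemma~\ref{lemma: sufficient} gives $I_q(AA'':B''|BC^*)_\tau=I_q(AA'':B'')_\tau\geq 0$, exactly as in Corollary~\ref{corollary: qstageIII}. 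All the difficulty therefore sits in the second term $I_q(A'':B|AC^*)_\tau$, a strong-subadditivity expression for the \emph{mixed} state $\tau_{AA''BC^*}$ obtained by tracing out $B''$ --- precisely the quantity that can be negative for $q>1$.

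Rather than bound that term in isolation, I would compute $I_q(A:B|C^*)_\tau$ in closed form by exploiting a product structure. Because $U'$ is unitary, $\tau_{A'B'C^*}$ is maximally entangled across the cut $A'B'\mid C^*$, so $\tau_{A'B'}=\mathds{1}_{A'B'}/(d_{A'}d_{B'})$; the channel of Figure~\ref{fig: qchannel} then applies the \emph{product} isometry $V_A\otimes V_B$ (with $V_A:\cH_{A'}\to\cH_A\otimes\cH_{A''}$ absorbing $\ket{\phi}_{E_A}$ and $U_A$, and likewise $V_B$), so that $\tau_{AA''BB''}=\tau_{AA''}\otimes\tau_{BB''}$ with the two factors maximally mixed of ranks $d_{A'}$ and $d_{B'}$. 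Replacing each entropy of a subset containing $C^*$ by the entropy of its complement (Schmidt), and then applying pseudo-additivity (Property~\ref{prop:Qpseudo}) to the resulting product marginals, all degree-one terms cancel and one is left with the identity
$$I_q(A:B|C^*)_\tau=(q-1)\,S_q(A|A'')_\tau\,S_q(B|B'')_\tau,$$
where $S_q(A|A'')_\tau=S_q(AA'')-S_q(A'')$, and where $S_q(AA'')=\ln_q d_{A'}$ and $S_q(BB'')=\ln_q d_{B'}$.

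With this identity the claim becomes, for $q>1$, the statement $\big(\ln_q d_{A'}-S_q(A'')\big)\big(\ln_q d_{B'}-S_q(B'')\big)\leq \ln_q d_{A'}\ln_q d_{B'}$, and this is where I expect the argument to break --- fatally. The factors $S_q(A|A'')_\tau$ need not be non-negative, since quantum Tsallis entropy is not monotone under partial trace: if $V_A$ and $V_B$ embed $A'$ and $B'$ into \emph{maximally entangled} subspaces of $\cH_A\otimes\cH_{A''}$ and $\cH_B\otimes\cH_{B''}$ of local dimension $D$, then $S_q(A'')=S_q(B'')=\ln_q D>\ln_q d_{A'}$, both conditional entropies are negative, and their product grows. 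Taking $d_{A'}=d_{B'}=2$ gives $I_q(A:B|C^*)_\tau=(q-1)(\ln_q 2-\ln_q D)^2$, which exceeds $f(q,2,2)=(q-1)(\ln_q 2)^2$ as soon as $\ln_q D>2\ln_q 2$, achievable for every $q\in(1,2)$ by taking $D$ large enough (e.g.\ $q=3/2$, $D=16$ gives $I_q\approx 0.42>0.17\approx f$). Hence the conjecture is false as stated, the obstruction being exactly the failure of strong subadditivity that prompted its being posed as a conjecture.

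The closed-form identity nonetheless shows what does hold and suggests the right repair. At $q=1$ the prefactor $(q-1)$ kills the right-hand side, recovering $I_1(A:B|C^*)_\tau=0$ and hence the result of~\cite{Allen2017}. For $q>1$ the bound $I_q(A:B|C^*)_\tau\leq f(q,d_{A'},d_{B'})$ holds under the sufficient condition $S_q(A|A'')_\tau\geq 0$ and $S_q(B|B'')_\tau\geq 0$ (intuitively, that the local dilations do not entangle each kept system with its traced-out partner): then $0\leq S_q(A|A'')_\tau\leq \ln_q d_{A'}$ and similarly for $B$, so the product is bounded by $\ln_q d_{A'}\ln_q d_{B'}$. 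This obstruction is absent in Lemma~\ref{lemma: qstageII} (both $A''$, $B''$ trivial) and in Corollary~\ref{corollary: qstageIII} (one side conditioned on nothing, so one factor is the manifestly non-negative $\ln_q d_{A'}$), which is why those cases go through unconditionally.
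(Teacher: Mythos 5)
There is no proof of this statement in the paper to compare against: it is posed explicitly as a conjecture, motivated by the observation that the desired inequality would follow from $I_q(AA'':B''|BC^*)_\tau+I_q(A'':B|AC^*)_\tau\geq 0$, which is a strong-subadditivity-type condition known to fail for Tsallis entropies when $q>1$. So the relevant question is whether your analysis is sound, and as far as I can check it is --- and it resolves the conjecture in the negative. Your closed-form identity is correct: since $\tau_{AA''BB''C^*}$ is pure, each of $S_q(AC^*),S_q(BC^*),S_q(ABC^*),S_q(C^*)$ equals the entropy of its complement ($A''BB''$, $AA''B''$, $A''B''$, $AA''BB''$ respectively), and because $\tau_{AA''BB''}=V_A\tfrac{\mathds{1}_{A'}}{d_{A'}}V_A^\dagger\ot V_B\tfrac{\mathds{1}_{B'}}{d_{B'}}V_B^\dagger$ is a product, pseudo-additivity applied to all four terms cancels the linear parts and leaves $I_q(A:B|C^*)_\tau=(q-1)\bigl(\ln_qd_{A'}-S_q(A'')\bigr)\bigl(\ln_qd_{B'}-S_q(B'')\bigr)$. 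This identity correctly reproduces Lemma~\ref{lemma: qstageII} (trivial $A'',B''$) and Corollary~\ref{corollary: qstageIII} (one factor equal to $\ln_qd_{A'}\geq0$), which is a good consistency check. Your counterexample is also valid: taking $d_{A'}=2$ and $V_A$ with image spanned by $\ket{v_1}=\tfrac{1}{\sqrt{D}}\sum_i\ket{i}\ket{i}$ and $\ket{v_2}=\tfrac{1}{\sqrt{D}}\sum_i\omega^i\ket{i}\ket{i}$ (with $\omega=e^{2\pi i/D}$) gives $\tau_{A''}=\mathds{1}_D/D$ and hence $S_q(A|A'')=\ln_q2-\ln_qD<0$; doing the same on the $B$ side yields $I_q(A:B|C^*)_\tau=(q-1)(\ln_q2-\ln_qD)^2$, which for $q=3/2$, $D=16$ evaluates to $\approx0.418>0.172\approx f(3/2,2,2)$. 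Such a $V_A$ extends to a unitary $U_A$ with an ancilla of appropriate dimension, so the channel is of the form $\mathcal{E}_{\textsc{iii}}$ assumed in the conjecture.

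Two remarks on scope. First, your family of examples only produces violations for $1<q<2$: with $d_{A'}=d_{B'}=2$ the condition $\ln_qD>2\ln_q2$ requires $2^{2-q}>1$, and larger $d_{A'},d_{B'}$ only make this harder ($d_{A'}^{1-q}+d_{B'}^{1-q}>1$ is needed in the $D\to\infty$ limit); a single counterexample nevertheless suffices to refute the universally quantified statement. Second, your example does not touch Conjecture~\ref{conjecture: qstageIII2}: there $d_A=d_B=D$, so the bound $f(q,D,D)$ is much weaker and is respected by your construction. It would be worth recording the identity itself, since it isolates exactly when the bound holds (namely $S_q(A|A'')_\tau\,S_q(B|B'')_\tau\leq\ln_qd_{A'}\ln_qd_{B'}$, e.g.\ whenever both conditional entropies are non-negative) and shows that the failure is precisely the negativity of quantum conditional Tsallis entropy under the local dilations.
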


Notice that in Corollary~\ref{corollary: qstageIII} and
Conjecture~\ref{conjecture: qstageIII}, the bounds are functions of
$d_{A'}$ and $d_{B'}$ and not of the dimensions of the systems $A$ and
$B$ (those in the quantity on the left hand side). In the case that
$d_A\geq d_{A'}$ and $d_B\geq d_{B'}$, the fact that $f(q,d_A,d_B)$ is
a strictly increasing function of $d_A$ and $d_B$ $\forall q\geq 0$
allows us to write $I_q(\tilde{A}:B|C^*)_\tau\leq
f(q,d_{\tilde{A}},d_B)$ and $I_q(A:B|C^*)_\tau\leq f(q,d_A,d_B)$ under
the conditions of Corollary~\ref{corollary: qstageIII} and
Conjecture~\ref{conjecture: qstageIII} respectively. However, if
$d_A\leq d_{A'}$ and/or $d_B\leq d_{B'}$, the bounds
$f(q,d_{\tilde{A}},d_B)$ and $f(q,d_A,d_B)$ are tighter than the bound
$f(q,d_{A'},d_{B'})$ and so not implied. However, based on the several examples that we have checked, we further conjecture the following.

\begin{conjecture}\label{conjecture: qstageIII2}
Under the same conditions as Conjecture~\ref{conjecture: qstageIII}
  $$I_q(A:B|C^*)_\tau\leq f(q,d_A,d_B) \qquad \forall q\geq 1.$$
\end{conjecture}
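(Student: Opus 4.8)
The plan is to sidestep the chain-rule decomposition used to motivate Conjecture~\ref{conjecture: qstageIII} (which stalls precisely because strong subadditivity fails for Tsallis entropies when $q>1$) and instead to work directly with the \emph{global pure state} $\tau_{AA''BB''C^*}$---the Choi state of the isometric channel $\mathcal{E}_{\textsc{ii}}$ \emph{before} the final partial traces over $A''$ and $B''$. Because this state is pure, the Tsallis entropy of any subset of systems equals that of its complement, so in $I_q(A:B|C^*)_\tau=S_q(AC^*)+S_q(BC^*)-S_q(C^*)-S_q(ABC^*)$ I can replace each term by the entropy of its complementary set, obtaining the identity $I_q(A:B|C^*)_\tau=I_q(A:B|A''B'')_\tau$. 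The key structural fact I would establish first is that the output marginal factorises across the two wings, $\tau_{AA''BB''}=\tau_{AA''}\ot\tau_{BB''}$. This holds because $\Tr_{C^*}\tau=\tfrac{1}{d_C}\mathcal{E}_{\textsc{ii}}(\mathds{1}_C)$ and $U'\mathds{1}_CU'^{\dagger}=\mathds{1}_{A'}\ot\mathds{1}_{B'}$, so after adjoining the pure ancillas $\ket{\phi}_{E_A},\ket{\psi}_{E_B}$ and applying the local unitaries $U_A,U_B$ the state is a tensor product across the $A'E_A\,{:}\,B'E_B$ partition.

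Given this product structure, the next step is a short computation. Applying pseudo-additivity (Property~\ref{prop:Qpseudo}) to each of the four entropies in $I_q(A:B|A''B'')_\tau$, the linear contributions cancel and only the bilinear cross terms survive, leaving
\begin{equation*}
  I_q(A:B|C^*)_\tau=(q-1)\,S_q(A|A'')_{\tau_{AA''}}\,S_q(B|B'')_{\tau_{BB''}},
\end{equation*}
where $S_q(A|A'')=S_q(AA'')-S_q(A'')$ and similarly for $B$. For $q\geq1$ the prefactor $q-1$ is non-negative, so it only remains to bound the product of these two conditional Tsallis entropies.

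The third step is to prove $|S_q(A|A'')|\leq\ln_q d_A$ (and the analogue for $B$). The upper bound is immediate from subadditivity (Equation~\eqref{eq:Qsubadd}, valid for $q\geq1$): $S_q(AA'')-S_q(A'')\leq S_q(A)\leq\ln_q d_A$, using the upper-bound property (Property~\ref{prop:Qupper}). For the lower bound I would invoke an Araki--Lieb-type inequality obtained by purifying $\tau_{AA''}$ with a reference $R$: purity gives $S_q(AA'')=S_q(R)$ and $S_q(A'')=S_q(AR)$, so subadditivity applied to the pair $AR$ yields $S_q(A'')-S_q(AA'')=S_q(AR)-S_q(R)\leq S_q(A)\leq\ln_q d_A$. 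Hence $|S_q(A|A'')|\,|S_q(B|B'')|\leq\ln_q d_A\ln_q d_B$, and since $xy\leq|x|\,|y|$ we obtain $I_q(A:B|C^*)_\tau\leq(q-1)\ln_q d_A\ln_q d_B=f(q,d_A,d_B)$, as claimed.

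I expect the main obstacle to be a careful verification of the product-marginal lemma $\tau_{AA''BB''}=\tau_{AA''}\ot\tau_{BB''}$, since it is this fact---combined with pseudo-additivity---that collapses the conditional mutual information into a bare product and thereby avoids any appeal to strong subadditivity. The two entropy bounds are then comparatively routine consequences of Tsallis subadditivity; note that this is exactly where the hypothesis $q\geq1$ enters, as subadditivity (and with it both the upper and the Araki--Lieb bounds) can fail for $q<1$, consistent with the stated range.
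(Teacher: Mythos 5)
This statement is left \emph{unproved} in the paper: it is stated as a conjecture supported only by examples, precisely because the route sketched there (splitting $I_q(AA'':BB''|C^*)$ via the chain rule) runs into the failure of strong subadditivity for Tsallis entropies with $q>1$. Your argument bypasses that obstruction entirely and, as far as I can check, constitutes a complete and correct proof. All three ingredients hold. (i) The global Choi state $\tau_{AA''BB''C^*}$ of the isometric channel $\mathcal{E}_{\textsc{ii}}$ is pure, and since the Tsallis entropy of a marginal of a pure state depends only on the nonzero eigenvalues, each of the four terms in $I_q(A{:}B|C^*)_\tau=S_q(AC^*)+S_q(BC^*)-S_q(C^*)-S_q(ABC^*)$ may be replaced by the entropy of its complement, giving exactly $I_q(A{:}B|A''B'')_\tau$. (ii) $\Tr_{C^*}\tau=\tfrac{1}{d_C}\mathcal{E}_{\textsc{ii}}(\mathds{1}_C)$ and $U'\mathds{1}_CU'^{\dagger}=\mathds{1}_{A'}\ot\mathds{1}_{B'}$, so the output marginal is indeed a product $\tau_{AA''}\ot\tau_{BB''}$ across the two wings, and this product structure is inherited by all four reduced states appearing in $I_q(A{:}B|A''B'')$. (iii) Applying pseudo-additivity (Property~\ref{prop:Qpseudo}) to those four terms, the linear parts cancel and the bilinear parts factor, yielding the exact identity
$I_q(A{:}B|C^*)_\tau=(q-1)\bigl(S_q(AA'')-S_q(A'')\bigr)\bigl(S_q(BB'')-S_q(B'')\bigr)$.
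The final bounds $|S_q(AA'')-S_q(A'')|\leq\ln_q d_A$ (one direction directly from subadditivity~\eqref{eq:Qsubadd} together with Property~\ref{prop:Qupper}, the other from the same inequalities after purifying $\tau_{AA''}$) use $q\geq1$ exactly where they should, and together with $\ln_q d\geq 0$ they give the claimed bound $f(q,d_A,d_B)$.

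Two sanity checks confirm the identity in (iii): when $A''$ and $B''$ are trivial it reduces to the equality $I_q=f(q,d_{A'},d_{B'})$ of Lemmas~\ref{lemma: qstageI} and~\ref{lemma: qstageII}, and at $q=1$ it gives $I(A{:}B|C^*)_\tau=0$, recovering the result of~\cite{Allen2017}. If you write this up, the step deserving an explicit lemma is (ii), as you anticipate. One caveat on scope: your bound is $f(q,d_A,d_B)$ rather than $f(q,d_{A'},d_{B'})$; since these are incomparable in general, you have settled Conjecture~\ref{conjecture: qstageIII2} directly but not automatically Conjecture~\ref{conjecture: qstageIII}, although in the regime $d_A\leq d_{A'}$, $d_B\leq d_{B'}$ singled out in the paper yours is the stronger statement.
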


Further, it is shown in \cite{Allen2017} that if $C$ is a \emph{complete common cause} of $A$ and $B$ then the corresponding Choi state, $\rho_{AB|C}$ decomposes as $\rho_{AB|C}=(\rho_{A|C}\otimes\mathds{1}_B)(\mathds{1}_A\otimes\rho_{B|C})$ or $\rho_{AB|C}=\rho_{A|C}\rho_{B|C}$ in analogy with the classical case where if a classical random variable $Z$ is a common cause of the random variables $X$ and $Y$, then the joint distribution over these variables factorises as $p_{XY|Z}=p_{X|Z}p_{Y|Z}$. Then we have that $\tau_{ABC^*}=\frac{1}{d_C}\rho_{AB|C}=\frac{1}{d_C}\rho_{A|C}\rho_{B|C}$. By further analogy with the classical results of Section~\ref{sec: tsalcaus}, one may also consider
instead a state of the form
$\hat{\sigma}_{ABCC^*}=\sigma_C\otimes\frac{1}{d_C}\rho_{A|C}\rho_{B|C}=\sigma_C\otimes\tau_{ABC^*}$,
where $\sigma_C\in\mathcal{S}(\mathcal{H}_C)$.\footnote{This is the
  analogue of the statement $p_{ABC}=p_Cp_{A|C}p_{B|C}$ for
  probability distributions.}  Note that $\hat{\sigma}_{ABCC^*}$ is a
valid density operator on $\mathcal{H}_A\otimes\mathcal{H}_B\otimes\mathcal{H}_C\otimes\mathcal{H}_{C^*}$.

\begin{lemma}\label{lemma: sigma}
  The state $\hat{\sigma}_{ABCC^*}=\sigma_C\otimes\tau_{ABC^*}$ defined above
  satisfies $$I_q(A:B|CC^*)_{\hat{\sigma}}\leq f(q,d_A,d_B)\,,$$
whenever $I_q(A:B|C^*)_{\tau}\leq f(q,d_A,d_B)$ holds for the state $\tau_{ABC^*}=\frac{1}{d_A}\rho_{AB|C}$, where $\rho_{AB|C}$ represents the quantum channel from $C$ to $A$ and $B$ and $\sigma_C$ is the input quantum state to this channel.
\end{lemma}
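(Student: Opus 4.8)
The plan is to exploit the tensor-product structure $\hat{\sigma}_{ABCC^*}=\sigma_C\otimes\tau_{ABC^*}$ together with the pseudo-additivity of quantum Tsallis entropies (Property~\ref{prop:Qpseudo}). First I would write the conditional mutual information in the four-term form
\begin{equation*}
  I_q(A:B|CC^*)_{\hat\sigma}=S_q(ACC^*)+S_q(BCC^*)-S_q(CC^*)-S_q(ABCC^*),
\end{equation*}
and observe that each of the four relevant reduced states of $\hat{\sigma}$ retains the product form across the $C$ versus $C^*$-side cut: tracing out $B$ gives $\hat{\sigma}_{ACC^*}=\sigma_C\otimes\tau_{AC^*}$, and similarly $\hat{\sigma}_{BCC^*}=\sigma_C\otimes\tau_{BC^*}$, $\hat{\sigma}_{CC^*}=\sigma_C\otimes\tau_{C^*}$ and $\hat{\sigma}_{ABCC^*}=\sigma_C\otimes\tau_{ABC^*}$.

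Next I would apply pseudo-additivity to each of these four states with $\sigma_C$ as one factor, writing, for example, $S_q(ACC^*)_{\hat\sigma}=S_q(C)_\sigma+S_q(AC^*)_\tau+(1-q)S_q(C)_\sigma S_q(AC^*)_\tau$ and likewise for the other three. Substituting these into the four-term expression, the bare $S_q(C)_\sigma$ contributions cancel (appearing with $+$ twice and $-$ twice), while the remaining $\tau$-entropies assemble into $I_q(A:B|C^*)_\tau$ both at zeroth order and inside the common factor $(1-q)S_q(C)_\sigma$. This yields the clean identity
\begin{equation*}
  I_q(A:B|CC^*)_{\hat\sigma}=\bigl[1+(1-q)S_q(C)_\sigma\bigr]\,I_q(A:B|C^*)_\tau.
\end{equation*}

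Finally I would control the scalar multiplier $m:=1+(1-q)S_q(C)_\sigma$. For $q\geq 1$ we have $(1-q)\leq 0$ and $S_q(C)_\sigma\geq 0$, so $m\leq 1$; and using the upper bound $S_q(C)_\sigma\leq\ln_q d_C$ (Property~\ref{prop:Qupper}) together with the identity $(1-q)\ln_q d_C=d_C^{1-q}-1$, I obtain $m\geq d_C^{1-q}>0$, hence $0<m\leq 1$. A short case split then closes the argument: if $I_q(A:B|C^*)_\tau\geq 0$, then $m\,I_q(A:B|C^*)_\tau\leq I_q(A:B|C^*)_\tau\leq f(q,d_A,d_B)$ by the hypothesis; if instead $I_q(A:B|C^*)_\tau<0$, then $m\,I_q(A:B|C^*)_\tau<0\leq f(q,d_A,d_B)$, where the non-negativity of $f$ for $q\geq 1$ follows from $f(q,d_A,d_B)=(q-1)\ln_q d_A\ln_q d_B$. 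The only genuine subtlety---and hence the step I would treat most carefully---is verifying that the multiplier stays in $(0,1]$, since without the upper bound on $S_q(C)_\sigma$ one could not exclude $m$ becoming negative; everything else follows directly from pseudo-additivity and the cancellation it produces.
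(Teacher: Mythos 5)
Your proof is correct and follows essentially the same route as the paper's: expand $I_q(A:B|CC^*)_{\hat\sigma}$ into four entropies, use pseudo-additivity on each product-form marginal to extract the identity $I_q(A:B|CC^*)_{\hat\sigma}=\bigl[1+(1-q)S_q(C)_\sigma\bigr]I_q(A:B|C^*)_\tau$ (the paper identifies the multiplier as $\Tr(\sigma_C^q)=\sum_c p_c^q\leq 1$), and conclude. Your explicit case split on the sign of $I_q(A:B|C^*)_\tau$ is a small extra precaution the paper glosses over, but the substance is identical.
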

\begin{proof}
  Since $\hat{\sigma}$ is a product state between the $C$ and $ABC^*$ subsystems, by the pseudo-additivity of quantum Tsallis entropies and
  the chain rule we have
  \begin{align*}
    I_q(A:B|CC^*)_{\hat{\sigma}}&=S_q(ACC^*)+S_q(BCC^*)-S_q(ABCC^*)-S_q(CC^*)\\                   &=S_q(AC^*)+S_q(BC^*)-S_q(ABC^*)-S_q(C^*)\\
    &\ \ \ -(q-1)S_q(C)\left(S_q(AC^*)+S_q(BC^*)-S_q(ABC^*)-S_q(C^*)\right)\\
                                &=(1-(1-q)S_q(C))I(A:B|C^*)\\
                                &=\Tr(\sigma_C^q)I(A:B|C^*)\,.
  \end{align*}

Now let $p_c$ be the distribution whose entries are the eigenvalues
of $\sigma_C$.  We have $\Tr(\sigma_C^q)=\sum_cp_c^q$.  Thus if $q>1$,
$\sum_cp_c^q\leq 1$ with equality if and only if $p_c=1$ for
some value of $c$.  It follows that
$$I_q(A:B|CC^*)_{\hat{\sigma}}\leq I_q(A:B|C^*)_\tau\,.$$
Therefore, if $I_q(A:B|C^*)_\tau\leq f(q,d_A,d_B)$, we also have
$I_q(A:B|CC^*)_{\hat{\sigma}}\leq f(q,d_A,d_B)$.
\end{proof}
\end{document}